\documentclass[11pt]{article}

\usepackage{subfigure}
\usepackage{graphicx}
\usepackage[cmex10]{amsmath}
\usepackage{amsfonts}
\usepackage{amssymb}
\usepackage{booktabs}



\newcommand{\uno}{\>}
\newcommand{\due}{\>\>}
\newcommand{\3}{\>\>\>}

\newcommand{\var}[1]{{\it #1\/}}
\newcommand{\com}[1]{{\bf #1\/}}

\newtheorem{theorem}{Theorem}

\newtheorem{algorithm} {Algorithm}

\newtheorem{definition} {Definition}
\newtheorem{example} {Example}

\newtheorem{proposition} {Proposition}

\newenvironment{proof}[1][Proof]{\textbf{#1.} }{\ \rule{0.5em}{0.5em}}

\begin{document}

\title{On the Error Resilience of Ordered Binary Decision Diagrams}

\author{Anna Bernasconi\thanks{Dipartimento di Informatica, Universit\`a  di Pisa, Italy. {\tt anna.bernasconi@unipi.it}} \and Valentina Ciriani\thanks{Dipartimento di Informatica, Universit\`a degli Studi di Milano, Italy. {\tt valentina.ciriani@unimi.it}} \and Lorenzo Lago\thanks{Dipartimento di Informatica, Universit\`a degli Studi di Milano, Italy. {\tt lorenzo.lago@studenti.unimi.it}}}

\maketitle

\begin{abstract}

An Ordered Binary Decision Diagram (OBDD) is a data structure that is used in an increasing number of fields of Computer Science (e.g., logic synthesis, program verification, data mining, bioinformatics, and data protection)  for representing and manipulating discrete structures and Boolean functions.  
The purpose of this paper is to study the error resilience of OBDDs and to 
 design a resilient version of this data structure, i.e., a self-repairing OBDD. 
In particular, we describe some strategies that make reduced ordered OBDDs resilient to errors in the  indices, that are associated to the input variables, or in the pointers (i.e., OBDD edges) of the nodes.
These strategies exploit the inherent redundancy of the data structure, as well as the redundancy introduced by its efficient implementations. The solutions we propose allow the exact restoring of the original OBDD and are suitable to be applied to classical software packages for the manipulation of  OBDDs currently in use. 
Another result of the paper is the definition of a new canonical OBDD model, called {\em Index-Resilient Reduced OBDD}, which guarantees  that a node with a faulty index has a  reconstruction cost $O(r)$, where $r$ is the number of nodes with corrupted index. Experimental results on a classical benchmark suite validate the proposed approaches.
\end{abstract}

\section{Introduction}
\label{intro}

Ordered Binary Decision Diagrams (OBDDs) are the state-of-the-art data structure for Boolean function representation and manipulation. Indeed, they are widely used in logic synthesis, CAD of integrated circuits and in many safety critical applications, like verification  (see~\cite{BRB90} and~\cite{B92}, and~\cite{BCDV08,BCTV09,BCLP06} for more recent applications of OBDDs to logic synthesis). 
A binary decision diagram (BDD) over a set of Boolean variables $X=\{x_0, x_1,\ldots x_{n-1}\}$ is a rooted, connected direct acyclic graph, where each non-terminal node is labeled with a variable of $X$, and each terminal node is labeled with a value in $\{0,1\}$. Each non-terminal node has exactly two outgoing edges, $0$-edge and $1$-edge, pointing to two nodes  called $0$-child and $1$-child of the node. 
A BDD is \emph{ordered}  if there exists a total order $<$ over the set $X$ of variables such that if a non-terminal node is labeled by $x_{i}$, and its $0$-child and $1$-child have labels $x_{i_{0}}$ and $x_{i_{1}}$, respectively, then $x_{i}<x_{i_{0}}$ and $x_{i}<x_{i_{1}}$.

BDDs were first introduced by Lee~\cite{L59} and Akers~\cite{A78}, and developed by Bryant who proposed a canonical representation  in~\cite{B86}. 
Besides digital-system design, nowadays BDDs are  applied for representing and manipulating discrete structures 
in other research fields, as for instance data mining~\cite{M93,Minato10,Minato13}, bioinformatics~\cite{MI07,RC12,YNBD05}, and data protection~\cite{CDFLS12}.
The growing interest in BDDs is also evidenced by the fact that
in 2009 Knuth dedicated the first fascicle in the volume 4 of ``The Art of Computer Programming'' to this data structure~\cite{K09}. 

However, despite their popularity, error resilient versions of BDDs have not yet been proposed. 
We are aware only of a paper where security aspects of implementation techniques of OBDDs are discussed, and methods to verify the integrity of OBDDs are presented~\cite{D98}. 
In particular, a recursive checksum technique for on-line and off-line checks is proposed and experimentally evaluated: the on-line check verifies the correctness of the node during each access, so that errors can be detected very early; while the off-line check (usually performed by a depth-first-search algorithm starting from the rood of the OBDD) is  used to verify the integrity of the whole data structure.
However,~\cite{D98} only deals with the problem of error detection, and does not consider error correction, which is instead the main goal of our paper.

Nowadays, the resilience of algorithms and data structures to memory fault is a very important issue~\cite{FGI07,FGI09,I10}:  fast, large, and cheap memories in today's computer platforms are characterized by non-negligible error rates, which cannot be underestimated as the memory size becomes larger~\cite{JNW08}.  Computing in the presence of memory errors is therefore a fundamental task in many applications running on large, fast and cheap memories, as the correctness of the underlying algorithms may be jeopardized by even very few memory faults.

The scientific community has studied the problem in two different frameworks: {\em (i)} fault tolerant hardware design and {\em (ii)} development of error resilient algorithms and data structures. 
While fault tolerant hardware has been widely studied even in the past, the design of algorithms and data structures resilient to memory faults, i.e.,  algorithms and data structures that are able to perform the
tasks they were designed for, even in the presence of unreliable or corrupted information, has become much more attractive only recently (for a survey on the subject refer to~\cite{I10}).

The purpose of this paper is precisely to discuss the error resilience of OBDDs and to design a resilient version of this data structure. 

In particular, we describe some strategies that make reduced  OBDDs resilient to errors in the  indices, that are associated to the input variables, or in the pointers (i.e., OBDD edges) to the nodes.
These strategies exploit the inherent redundancy of this data structure, as well as the redundancy introduced by its efficient implementations. The solutions we propose {\em (i)} allow the exact restoring of the original OBDD and of the associated function $f$, and {\em (ii)} are suitable to be applied to classical software packages for the manipulation of  OBDDs currently in use, as for instance the CUDD library. 
Indeed, our first goal is to be able to efficiently reconstruct via software 
the corrupted OBDD  without changing the data structure.

However, to reach this goal we first assume that the unique table, i.e., a hash table used by most software implementation of OBDDs to facilitate their reduction (see  Section~\ref{prel} for more details),
 is  fault free. More precisely, we assume that the unique table is either implemented using error resilient linked lists~\cite{A96} or  it is stored in a safe memory area not affected by errors. The last one could be seen as a strong requirement, but fortunately we are able to remove this assumption completely still guaranteeing a very efficient reconstruction of all corrupted indices in the OBDD. 
Indeed,  the main contribution of the paper is the definition of a new canonical OBDD model, called {\em Index-Resilient Reduced OBDD}, which guarantees, by construction,  that a node with a faulty index has a  reconstruction cost $O(r)$, where $r$ is the number of nodes with corrupted index. As the new model does not exploit the unique table to restore all corrupted indices, we do not need a fault-free unique table anymore. Instead, we will only require that  the two terminal nodes (the leaves of the OBDD) are always uncorrupted, and therefore that they are memorized in a safe memory or duplicated. 
We also show how index-resilient reduced OBDDs can be constructed starting from binary decision trees or by applying Boolean operations  to  index-resilient reduced OBDDs. Both construction methods can be implemented with error resilient algorithms, i.e., algorithms capable of dealing with errors (in the data structures) occurring during their execution.

Finally, we describe some methods for dealing with errors on edges. We can consider two possible strategies: we use   safe unique tables, implemented with perfect hash functions, or we can use hash tables with error resilient linked lists~\cite{A96}. While the first approach guarantees a full error correction at the expense of the strong assumption on the fault freeness of the unique tables, the second strategy does not require safe unique tables, but can fail in some error corrections due to collisions. The experimental results indicate some setting for the hash tables that can limit the percentage of failed recoveries.

The paper is an extended version of the conference paper~\cite{BCL13} and is organized as follows.  Definitions  of the error models and preliminaries on OBDDs and their implementations are  described in Section~\ref{prel}. In Section~\ref{indices} we propose an efficient index reconstruction algorithm, and in Section~\ref{IR}  we introduce and study index-resilient OBDDs. 
Section~\ref{oper} discusses how   index-resilient OBDDs can be dynamically computed through a sequence of binary Boolean operators (as AND, OR, EXOR) applied to other index-resilient OBDDs  using the standard algorithm {\sc Apply} (reviewed in the Appendix). 
Section~\ref{edges} describes strategies for  broken edge  reconstruction. 
Experimental results for validating the proposed strategies  are reported in Section~\ref{exp}.
Section~\ref{concl} concludes the paper.

\section{Preliminaries}
\label{prel}

\subsection{Reduced Ordered Binary Decision Diagrams}
\label{prelBDD}

\begin{figure*}[ptb]
\begin{center}
\includegraphics[width=\textwidth]{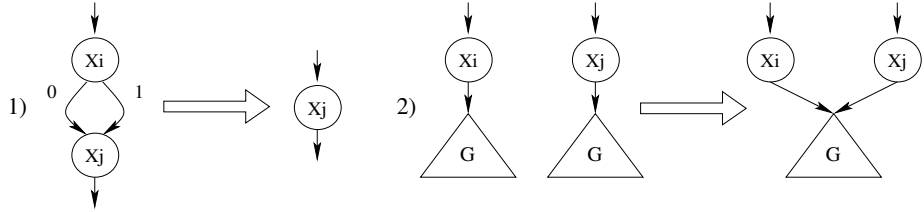}
\end{center}
\caption{\small \em Reduction rules for BDDs.}%
\label{figes7}%
\end{figure*}

A \emph{Binary Decision Diagram} (BDD) over a set of Boolean variables $X=\{x_0, x_1,\ldots x_{n-1}\}$ is a  rooted, connected direct acyclic graph, where each non-terminal (internal) node $N$ is labeled by a Boolean variable $x_i$ and has exactly two outgoing edges, the $0$-edge and the $1$-edge, pointing to two nodes called the $0$-child and the $1$-child of node $N$, respectively. $N$ is called the parent of its $0$- and $1$-child. Terminal nodes (leaves) are labeled $0$ or $1$.  For example, consider the BDD in Figure~\ref{fig:exampleBDD} with variables $x_0, x_1, x_2, x_3, x_4$. Each pointer to a $1$-child is depicted with a solid line, while each pointer to a $0$-child is depicted with a dashed line. 

Binary decision diagrams are typically used to represent Boolean functions. Let $f$ be a completely specified Boolean function, and $f_{x_{i}}$ and $f_{\overline{x}_{i}}$ be the functions resulting from $f$ when $x_{i}$ is $1$ and $0$, respectively.  The Shannon decomposition of $f$ around $x_{i}$ is:
$$
f=(x_{i}\land f_{x_{i}}) \lor (\overline{x}_{i} \land f_{\overline{x}_{i}})\,,
$$
where $\overline{x}_{i}$ is the negation of the variable $x_i$.
Any node in a BDD represents a Boolean function. The leaves represent the constant functions $0$ and $1$ and the root represents the entire Boolean function $f$. If the non-terminal node $N$ (with label $x_i$) represents the function $g$, then the $1$-child of $N$ (resp. $0$-child) represents the function $g_{x_{i}}$ (resp., $g_{\overline{x}_{i}}$).

The value of $f$ on the input $x_{0},\ldots,x_{n-1}$ is found by following the path indicated in the BDD by the values of $x_{0},\ldots,x_{n-1}$. A \emph{1-path} (resp. \emph{0-path}) in a BDD is a path from the root to a leaf labeled by $1$ (resp. $0$).
For example, consider the BDD in Figure~\ref{fig:exampleBDD}. 
The path that, starting from the root labeled with $a$, and corresponding to the variable $x_0$, goes through the nodes $b$, $d$, and $e$ (corresponding to $x_1$, $x_2$, and $x_3$) and arrives in the terminal $0$ is a 0-path. This path represents two possible inputs for the function $f$: $(0,0,0,0,0)$ and $(0,0,0,0,1)$, both with value 0 in $f$. 

\begin{figure}[t]
\centering
\subfigure[Binary Decision Diagram\label{fig:exampleBDD}]{\includegraphics[width=0.35\textwidth]{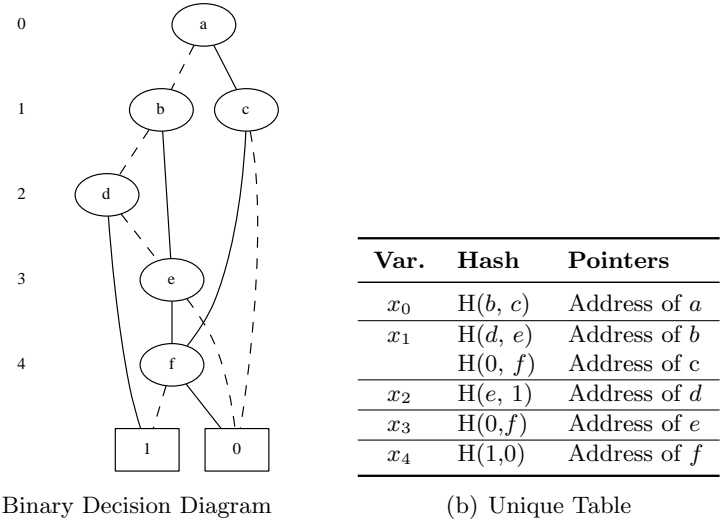}}
\hspace{5mm}
\subfigure[Unique Table\label{fig:exampleHash}]{
\begin{footnotesize}
\begin{tabular}[b]{cll} 
    \toprule
    \textbf{Var.} & \textbf{Hash} & \textbf{Pointers} \\
    \midrule
    $x_0$ & H($b$, $c$) & Address of $a$ \\
\hline
    {$x_1$} & H($d$, $e$) & Address of $b$ \\
          & H($0$, $f$) & Address of c \\
\hline
    {$x_2$} & H($e$, 1) & Address of $d$ \\
\hline
    {$x_3$} & H(0,$f$) & Address of $e$ \\
\hline
    {$x_4$} & H(1,0) & Address of $f$ \\
    \bottomrule
    \end{tabular}
\end{footnotesize}
}
\caption{A ROBDD and the corresponding unique table containing the subtables for the variables  $x_0, x_1, x_2, x_3$ and $x_4$.}
\end{figure}

In a BDD, each non terminal node $N$ is represented by the triple $$[N.\mbox{index},N.\mbox{0-child},N.\mbox{1-child}]$$ such that $N.\mbox{index}$ is the index $i$ of the variable $x_i$ which is the label of the node $N$, and $N$.0-child and $N$.1-child are the pointers to the $0$-child  and to the $1$-child of $N$, respectively.

A BDD is \emph{ordered} if there exists a total order $<$ over the set $X$ of variables such that if an internal node is labeled by $x_{i}$, and its $0$-child and $1$-child have labels $x_{i_{0}}$ and $x_{i_{1}}$, respectively, then $x_{i}<x_{i_{0}}$ and $x_{i}<x_{i_{1}}$. A BDD is \emph{reduced} if there exist no nodes whose $1$-child is equal to the $0$-child and there not exist two distinct nodes that are roots of  isomorphic subgraphs. A reduced and ordered BDD is called \emph{ROBDD}.
Starting from any OBDD we can obtain an equivalent reduced OBDD by repeatedly applying the following two rules: 
the \emph{Merge Rule} and the \emph{Deletion Rule} (see Figure~\ref{figes7}). According to the merge rule, if two nodes $M$ and $N$ have the same index, and their edges lead to the same nodes, then $N$ is deleted, and all the incoming edges of $N$ are redirected to $M$. Nodes $N$ and $M$ are called \emph{mergeable}. The deletion rule is used to remove {\em redundant nodes}, i.e.,  each  node $N$ that has both edges pointing to the same node $M$. In this case $N$ must be deleted and all its incoming edges redirected to $M$. When neither the merge rule nor the deletion rule can be applied, the OBDD is reduced.

The ROBDD is a canonical form; indeed, given a function $f:\{0,1\}^{n}\rightarrow\{0,1\}$ and a variable ordering $<$, there is exactly one ROBDD with variable ordering $<$ that represents $f$.

For example, consider the BDD in Figure~\ref{fig:exampleBDD}. The variable ordering is  $x_{0}<x_{1}<\ldots<x_{4}$, and each variable is represented by the corresponding level in the figure (i.e., the nodes at level $i$, with $0\leq i\leq 4$, are labeled with the variable $x_i$, i.e., have index $i$). For instance, the root node $a$ corresponds to the triple $[a.\mbox{index},a.\mbox{0-child},a.\mbox{1-child}] = [0, ${\em Address of node b}$, ${\em Address of node c}$]$. This BDD is ordered since any path from the root to a terminal node ($0$ or $1$) respects the variable ordering  $x_{0}<x_{1}<\ldots<x_{4}$. Moreover, the OBDD is reduced since we cannot apply the \emph{Merge Rule} or the \emph{Deletion Rule} to any node.

Many operations on Boolean functions can be efficiently implemented by OBDD's manipulations. For example Boolean operations (AND, OR, EXOR, etc.) between two OBDDs $g_1$ and $g_2$ have complexity $O(|g_1| \cdot |g_2|)$. 
The {\it if-and-only-if} operator ($\Leftrightarrow $), which tests two OBDDs for functional equivalence, has the same complexity.
The restriction of a function $f$, represented in an OBDD $B$ ($f_{x_i}$ or $f_{\overline{x}_i}$) can be computed in $O(|B|)$. Finally, the negation of a function $f$ has complexity $O(1)$.
For a description of these algorithms see~\cite{B86, B92} and the Appendix.

Note that the representation of Boolean functions with ROBDDs allows to perform operations that do not depend on the number of inputs that are equal to $1$ or $0$; for this reason, algorithms based on ROBDDs are usually defined implicit algorithms.  Usually, the terms BDD and OBDD are used instead of the correct term ROBDD.

There exists a wide variety of OBDD implementations.  Some of them focus mainly on operation efficiency, others on memory usage efficiency. However, some strategies are largely used and discussed in classical works.  One of them is the \emph{Unique Table}. The unique table ($\mathcal{U}$) is an array of hash tables (unique subtables), one for each variable of the function. We call $\mathcal{U}_{i}$ the unique subtable for the variable $x_i$. Each $\mathcal{U}_{i}$ contains the reference to all the nodes $N$ that contain variable $x_i$. This reference, usually the memory address of the node $[i,N.\mbox{0-child},N.\mbox{1-child}]$, is indexed using $N$.0-child and $N$.1-child as input of a hash function. The unique table is used to maintain the OBDD reduced: the lookup on the table is indeed used to determine whether it is necessary or not to create a new node. If a node $N$ with the same triple $[N.\mbox{index},N.\mbox{0-child},N.\mbox{1-child}]$  already exists in the OBDD, the lookup returns a pointer to that node; otherwise a new node is created. For example, Figure~\ref{fig:exampleHash} shows the unique table for the OBDD depicted in Figure~\ref{fig:exampleBDD}.

\subsection{Error Model}

The definition of a fault model is a key choice in designing resilient data structures. However, there is not a common fault model for data structures in literature~\cite{T90}: different authors have proposed different solutions, depending on the framework and on the type of errors considered.

A fault model in which any error is detectable via an error message when the program tries to reach the faulty object is proposed in~\cite{A96}. That work focuses on pointer-based data structures, as OBDDs. However, the authors assume that an error denies access to an entire node of the structure. Such granularity is not fine enough to catch some interesting cases. 
Consider for instance a node in a pointer-based data structure, as a list or a stack, usually composed of some data  and of one or more  pointers to other nodes. Certainly,  errors in different components of the node may affect differently the behavior of the data structure. The pointers, for instance, can maintain the structure properly connected despite an error in the data field of the node. Moreover, each component has peculiar characteristics, which could be exploited to increase the resilience of the data structure; these features might be lost if we consider only faults that involve the entire node.

A model with finer granularity, called {\em faulty-RAM}, is presented in~\cite{FGI07b,FI08,I10}. In faulty-RAM an adversary can corrupt any memory word and it is impossible to determine a priori if a memory area is corrupted or not. 
Such a scenario is realistic since an error can be induced by an external source, perhaps temporary, which can change any memory location that can not be discovered a priori. Consider for instance a minor change, e.g., a single bit, in a memory location storing an integer value:  the result is another integer, whose incorrect origin cannot always be detected.
Another  characteristic of the faulty-RAM model is its fine granularity: any memory location (from a single bit, the single data, or an entire structure) can be affected by a fault.

Another interesting error model is the {\em single-component model}~\cite{T90}, which focuses on single attributes of an item at a time and assumes that each error affects one component of one node of the storage structure, e.g., a pointer, a counter, an identifier field.
As mentioned earlier, reasoning at the component level allows us to exploit in a deeper way the characteristics of the nodes and of the whole data structure. For example, consider the basic representation of a node in a list: a fault can affect the given node, or the pointer to the next node. The two components have very different characteristics: if the loss of some data fields of a node, excluding the pointer,  can be tolerated  in certain conditions, the loss of the pointer can make unreachable a part of the data structure.

A further step forward is to model macro-faults such as {\em Copy Faults} (the content of a node is copied incorrectly into another node) or {\em Memory Allocation Faults} (memory is allocated that has already been used by another node), as proposed in~\cite{D98} for OBDD's integrity verification. 
This model can be seen as a particular case of faulty-RAM or single-component model where memory faults create situations hard to spot (for example a corrupted pointer points to another node instead to a meaningless memory area). 

In this paper we use the single-component model, and we consider, as components of a node, the index $i$ and both the $0$ and $1$-pointers. 
We assume perfect error detection capabilities: errors are immediately reported when the program tries to use the fault component of a node.
In fact, the main goal of our analysis is to study the capability of this data structure to restore corrupted data, not to detect them. However, it is worth mentioning that the peculiar structure of OBDDs could be exploited for error detection too. For instance, the presence of faulty indices can be reported any time the index of a node and those of its children in the diagram are not consistent with the fixed  variable ordering.

In our analysis, we also assume that the unique table of an OBDD is implemented using fault tolerant linked lists~\cite{A96}. 
We recall from~\cite{A96}  that  fault tolerant linked lists are resilient up to $d$ faults, where $d$ is a parameter, and present an $O(1)$ space and   amortized time overhead  with respect to the basic data structure. In presence of $f < d$ faults, at most $O(f \log f)$ nodes of the lists are lost. Moreover, each node has a constant size and a constant out-degree, and the reconstruction time is a small polynomial in $f$ and $d$, independently of the list size.

An alternative, but less practicable, assumption is to store the unique table of an OBDD in a safe memory area, which is not affected by errors.

Finally observe that, our analysis implicitly assumes  that an OBDD is constructed correctly, and that  memory faults occur when the data structure is in use. This could be seen as a strong assumption, but fortunately this assumption can be completely removed for index-resilient reduced OBDDs (see Section~\ref{IR} for more details). 

\section{Errors in Indices}
\label{indices}
As we have seen, the OBDD node core structure is made up of three elements: an index $i$ and the   pointers to the $0$-child and to the $1$-child of the node. On this simple structure two types of faults can occur: corruption of the index or corruption of a pointer. In this section we discuss error resilient indices in OBDDs. In particular, we propose an efficient reconstruction algorithm, analyze 
the cost  of the reconstruction of a corrupted index, and study the impact of the OBDD reduction rules on this cost. This study gives us the knowledge to describe in Section~\ref{IR} a new and efficient variant of OBDDs that is index resilient.

\subsection{Reconstruction Algorithm}

First of all, we show  how to reconstruct the index of a faulty node, restoring exactly the  original OBDD and the  associated function $f$.

Let $N$ be a node, described by the triple $[N.\mbox{index}, N.\mbox{0-child}, N.\mbox{1-child}]$, and suppose that a fault occurred on the index of $N$, so that $N$ cannot be associated with one of the input variables. This causes a problem as it is impossible to determine the value of the function $f$, represented by the OBDD, on all input assignments whose corresponding paths go through  $N$.

Without loss of generality, let us assume that the chosen variable ordering is $\{x_0, x_1, \ldots, x_{n-1}\}$, so that the index of a variable defines the level of the variable in the corresponding OBDD.
A first attempt to reconstruct the index of the faulty node $N$ is to define a range of indices that contains the original index of the node. 

\begin{definition}[Node range]
Let $N$ be a node in an OBDD $B$, $I_N = [i_P+1, i_C-1]$  is the range containing all the possible levels for  $N$ in $B$, where $i_P$ is the maximum index of $N$'s parents in $B$, and $i_C$ is the minimum index of its children.
\end{definition}
If $i_P+1 = i_C-1$, then the lost index $i$ is $i_C - 1$ (or $i_P + 1$). Otherwise, we cannot say which  index in the range was the original one.

\begin{example}
\label{ex: Intervallo}
Consider the OBDD in Figure~\ref{fig:exampleBDD}. Suppose that the index $1$ (corresponding to label $x_1$) of the node $c$ is faulty.  Note that the variable ordering of the OBDD is $x_0, x_1, x_2, x_3, x_4$. Since the faulty node has a parent node at level 0 (i.e., the parent has label $x_0$), the first possible level for $c$ is 1. Moreover, the minimum index child of $c$ has label $x_4$, which means that the index of node $c$ can be at most 3. In summary we have that $i_P=0$, $i_C=4$ and thus $I_c = [1, 3]$. 
\end{example}
Once we have defined the range $I_N$ containing the possible indices for $N$, we can use the unique table to find the correct index of $N$ in this range as depicted in Algorithm~\ref{fig:algoindex}.
For each possible value $l$ in the range $I_N$, the algorithm visits the collision list corresponding to the hash value Hash($N$.0-child, $N$.1-child)
in the unique subtable associated to index $l$; that is, the algorithm examines all nodes with index $l$ and pointers equals to the ones of the faulty node $N$,  until it finds  a node with the same memory address of $N$.

\begin{algorithm} [Reconstruction of the faulty index]
\label{fig:algoindex}
\ \ 

\begin{scriptsize}
\hrule 

\begin{tabbing}
{\bf INPUT}\\
$N$ /* Address of the faulty node  */\\
$BDD$ /* OBDD containing $N$ */\\
{\bf OUTPUT}\\
\var{Index} /* Correct index */\\ \\

\var{$I_N$} = ({\em MaxLevel}($N$.Parents), {\em MinLevel}(\{{$N$.0-child, $N$.1-child}\})) \\
\com{for} \= \com{each} $l\in I_N$ \com{do} \\
\uno		/* unique subtable for the index $l$ */ \\
\uno		uniqueTable \= = OBDD.UniqueTables[$l$] \\
\uno 		node\= = uniqueTable[Hash($N$.0-child, $N$.1-child)] \\
\uno			\com{while} \= (node $\not= N$ $\land$ node.Next $\not=$ NULL)\\
\due				/* visit of the collision list */ \\
\due				node = node.Next \\
\uno			\com{if}(node == N)	\com{return} $l$\\
		\com{return} -1

\end{tabbing}
\vspace{-8pt} 
\hrule
\end{scriptsize}
\end{algorithm}
Let us examine, through an example, how this algorithm works.
\begin{example}
Consider the OBDD in Figure~\ref{fig:exampleBDD}, together with its unique table of Figure~\ref{fig:exampleHash}. Suppose that $c$ is the faulty node. We know from Example~\ref{ex: Intervallo} that $I_c=[1,3]$.
Suppose to examine $I_c$ starting from level $l = 3$.  
The unique subtable associated to the variable with index $3$ contains a node with the same pointers  as the faulty node $c$, i.e., pointers to the terminal node $0$ and to node $f$, respectively, but the address of this node is different from that of $c$.
The algorithm then considers level $l = 2$: the unique subtable associated to the variable $x_2$ does not contain any node with the same pointers as $c$. Finally, for $l = 1$, a match is found: the unique subtable corresponding to the variable  $x_1$ contains a node with the same pointers as $c$ and the same memory address. Thus the correct index of the faulty node $c$ is 1.
\end{example}
In the next proposition, we prove that Algorithm~\ref{fig:algoindex} is correct: it always outputs a result, which is exactly the index of the faulty node (in the single error model).
\begin{proposition}
Let  $B$ be an ordered  OBDD, and $N$ the only node whose index $N.\mbox{index}$ is corrupted. Algorithm~\ref{fig:algoindex} always outputs an index $j$ such that $j=N.\mbox{index}$.
\end{proposition}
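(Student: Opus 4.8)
The plan is to prove the proposition in two parts: first that the algorithm necessarily terminates with a value (never falls through to the $\return -1$ line in the single-error setting), and second that the value it returns is exactly $N.\mbox{index}$. I would begin by fixing notation: let $i^* = N.\mbox{index}$ denote the true (pre-corruption) index of the faulty node, which still determines where $N$ actually sits in the diagram, and recall that the two pointers $N.\mbox{0-child}$ and $N.\mbox{1-child}$ are uncorrupted (single-component model), as is the entire unique table (assumed fault-free, implemented via fault-tolerant linked lists). The key structural fact I would invoke is the Node-range definition: since $B$ is ordered, every parent of $N$ has index strictly less than $i^*$ and every child of $N$ has index strictly greater than $i^*$. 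Hence $i_P < i^* < i_C$, which gives $i_P + 1 \le i^* \le i_C - 1$, i.e. $i^* \in I_N$. This is the crucial observation: the true index is guaranteed to lie in the range the algorithm scans, so the loop over $l \in I_N$ will at some point reach $l = i^*$.

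The second and main step is to show that the membership test succeeds exactly when $l = i^*$, and never spuriously earlier. For the positive direction, when the loop reaches $l = i^*$, I would argue as follows. The node $N$ was created with triple $[i^*, N.\mbox{0-child}, N.\mbox{1-child}]$, and the unique table insertion policy described in the preliminaries stores the address of $N$ in $\mathcal{U}_{i^*}$ at the bucket $\mathrm{Hash}(N.\mbox{0-child}, N.\mbox{1-child})$. Since the pointers are uncorrupted, the algorithm recomputes precisely this hash value, walks the correct collision list, and — because the table is fault-free — traverses the list until it finds the entry whose stored address equals $N$. Thus at $l = i^*$ the \com{while} loop exits with $\mathrm{node} = N$, the \com{if} test passes, and the algorithm returns $i^*$. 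So termination with the correct value is secured.

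For correctness I must also rule out an earlier false positive, i.e. that for some $l$ with $i_P+1 \le l < i^*$ the scan could return $\mathrm{node} = N$. The point is that the test is not merely ``some node with the same pointers exists at level $l$'' but ``a node whose \emph{memory address} equals that of $N$ appears in $\mathcal{U}_l$''. A single physical node $N$ is stored in exactly one unique subtable, namely $\mathcal{U}_{i^*}$, because its index was $i^*$ at construction time; no other subtable $\mathcal{U}_l$ with $l \ne i^*$ contains the address of $N$. Therefore, even though some other node $M$ at an earlier level may share the pointer pair $(N.\mbox{0-child}, N.\mbox{1-child})$ — exactly the situation in the worked example with levels $3$ and $1$ — the collision-list walk at that level will traverse addresses $M, M.\mbox{Next}, \ldots$ none of which equals $N$, so the inner loop ends with $\mathrm{node} \ne N$ and the \com{if} guard fails. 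Hence no level strictly below $i^*$ triggers a return, and the first (and only) successful match occurs at $l = i^*$.

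Combining the two directions: the loop is entered because $i^* \in I_N$, no iteration with $l < i^*$ returns, and the iteration $l = i^*$ returns $i^*$; consequently the algorithm outputs $j = i^* = N.\mbox{index}$, proving the claim. The step I expect to be the real substance of the argument is the second half of the correctness part — distinguishing a genuine match (same address) from a coincidental collision (same pointers, different node) — since it is what makes the address-equality test in the \com{if} line essential and is precisely where the single-error assumption and the fault-freeness of the unique table are used. The termination/containment part ($i^* \in I_N$) is comparatively routine, following directly from the ordering property and the Node-range definition.
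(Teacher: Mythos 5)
Your proof is correct and rests on the same two facts as the paper's: the uncorrupted child pointers recompute the right hash bucket in $\mathcal{U}_{i^*}$, and the address of $N$ is stored in no other subtable, so the address-equality test succeeds exactly at $l = i^*$. The paper phrases this as a two-case contradiction (no output, or wrong output) while you argue directly and additionally make explicit that $i^* \in I_N$, but the substance is the same.
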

\begin{proof}
Recall that the  unique table is an array of hash tables (unique subtables), each corresponding to a variable of the function. The subtable  associated to the variable $x_i$ ($0 \le i < n$) contains  the reference to all the nodes $N$ labeled by $x_i$. This reference, usually the memory address of the node $[i,N.\mbox{0-child},N.\mbox{1-child}]$, is indexed using $N$.0-child and $N$.1-child as input of a hash function.

Assume, by contradiction, that the algorithm does not output any result. This means that the unique subtable  associated to the variable with index $N.\mbox{index}$ does not contain a pointer to node $N$ in the   cell (or list in case of collisions of the hash function) corresponding to the hash value  $Hash(N.\mbox{0-child}, N.\mbox{1-child})$. Thus we reach a contradiction as such a situation can arise only if  $N \not\in B$.

Now suppose that  the algorithm outputs an index $j$ such that $j \neq N.\mbox{index}$.
This means that the unique subtable associated to  $j$ contains, in the  cell (or collision list) corresponding to the hash value  $Hash(N.\mbox{0-child}, N.\mbox{1-child})$, 
 a pointer to a node $N' = [j, N.\mbox{0-child}, N.\mbox{1-child}]$  stored in the  same memory area of $N$. 
This is again a contradiction. In fact (i) a  pointer to a node is  always inserted  in the unique subtable corresponding to its index, thus the pointer to $N$ cannot be stored in the unique subtable of $j \neq N.\mbox{index}$; and (ii) there cannot exist a node $N' \neq N$ in the same memory area of $N$, as  different nodes cannot be stored in the same memory location.
\hfill\end{proof} 

We can note that the reconstruction of a faulty node $N$ costs $O(|I_N|)$ on average, since operations on the unique hash table have an average constant time complexity.

Observe that the algorithm can handle at most one error, thus it outputs an invalid index value  -1 only in presence of more faulty nodes, when it cannot work as expected. 
Such a situation could arise for instance if the index of a parent or of a child of the faulty node $N$ is corrupted as well. Indeed, in this case the range $I_N$ computed by the algorithm would not be correct as, e.g., it could not contain the level of the faulty node.
However, such a situation can be easily handled in the following way: if, after scanning $I_N$, the algorithm has not found  the index of $N$, the range $I_N$ can be extended considering  lower values for the parents' indices, and higher values for the index of the children of $N$.
In the worst case, when all indices are corrupted, we must set $I_N= [0,n]$ as range of the root of the OBDD, to reconstruct the index of the root, while for all other nodes in the OBDD we will be able to set only a lower limit for their range, if we restore all indices from the root down to the terminal nodes.
 
\subsection{Reconstruction Cost}
Let us now examine which characteristics make an OBDD more suitable to the reconstruction of a corrupted index, that is on which diagrams the proposed algorithm is more efficient. To this aim, we introduce a metric to measure the cost of the reconstruction of a corrupted index of an OBDD node in the worst case, the overall cost of index reconstruction for all nodes in an OBDD, and the average reconstruction cost.

\begin{definition}[Index reconstruction cost]
The reconstruction cost $C(N)$ of the faulty index $N$ is given by the number of indices that are candidate to be the correct one in $N$.
\end{definition} 
If we consider the case of one fault only in node $N$, we have that $C(N)$ is at most $|I_N|$. In particular, $C(N) = |I_N|$ whenever there is no additional knowledge on the structure of the OBDD. In the rest of this section, we therefore assume that $C(N) = |I_N|$. Instead, in Section~\ref{IR} we will study OBDDs with a particular structure implying that $C(N) \le |I_N|$.

For example, the reconstruction cost of the node $c$ of the OBDD in Figure~\ref{fig:exampleBDD} is the cardinality of its range, i.e., $C(c) = |I_c| = |[1,3]| = 3$.

\begin{definition}[Overall index reconstruction cost]
Given an OBDD $B$ with $k$ nodes  $\{N_1, N_2, ..., N_k\}$, the {\em overall index reconstruction cost} of its nodes is  $$C_t(B) = \sum_{N \in \{N_1, ..., N_k\}} C(N)\,.$$
\end{definition}

\begin{definition}[Average index reconstruction cost]
Given an OBDD $B$ with $k$ nodes, the {\em average index reconstruction cost} of its nodes is  $$C_m(B) = \frac{C_t(B)}{k}\,.$$ 
\end{definition}

In the best case, $C_m$ is a constant, meaning that the index of each node of the OBDD can be reconstructed in constant time.
This condition is satisfied, e.g., by a complete OBDD, where no reduction rules have been applied. In fact,  in a ``complete'' unreduced OBDD, all paths from the root to the terminal nodes contain exactly $n$ nodes, where $n$ is the number of input variables. Thus, for each node $N$, $C(N) = |I_N| = 1$.

It is interesting to notice that the optimal cost $C_m(B) = 1$ can also be reached by reduced OBDD, as it happens, e.g., for the parity function, whose OBDD, even  if  very compact, with at most two nodes per level, only contains paths of length $n$, i.e., path with a node on each level (see Fig.~\ref{fig:exampleParity}). 

\begin{figure}[t]
\subfigure[Parity function\label{fig:exampleParity}]{\includegraphics[width=0.40\textwidth ]{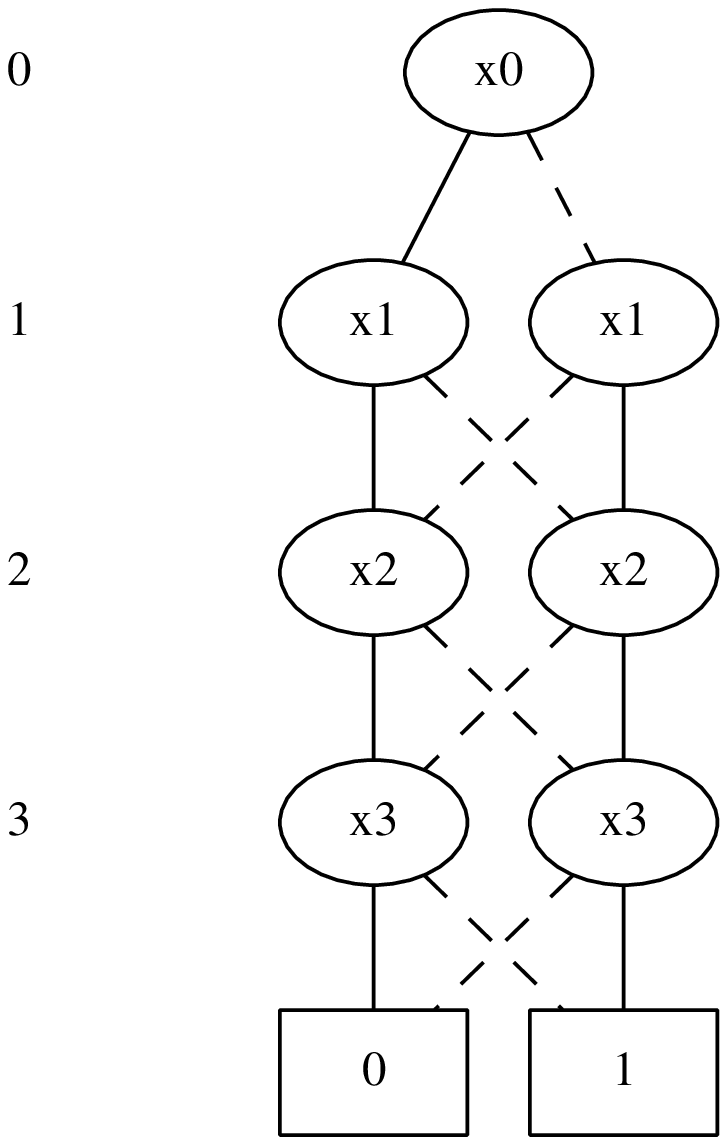}}
\subfigure[$\overline x_3 \land ((\overline x_0 \land \overline x_1) \lor \overline x_2)$\label{fig:exampleCm1}]{\includegraphics[width=0.40\textwidth ]{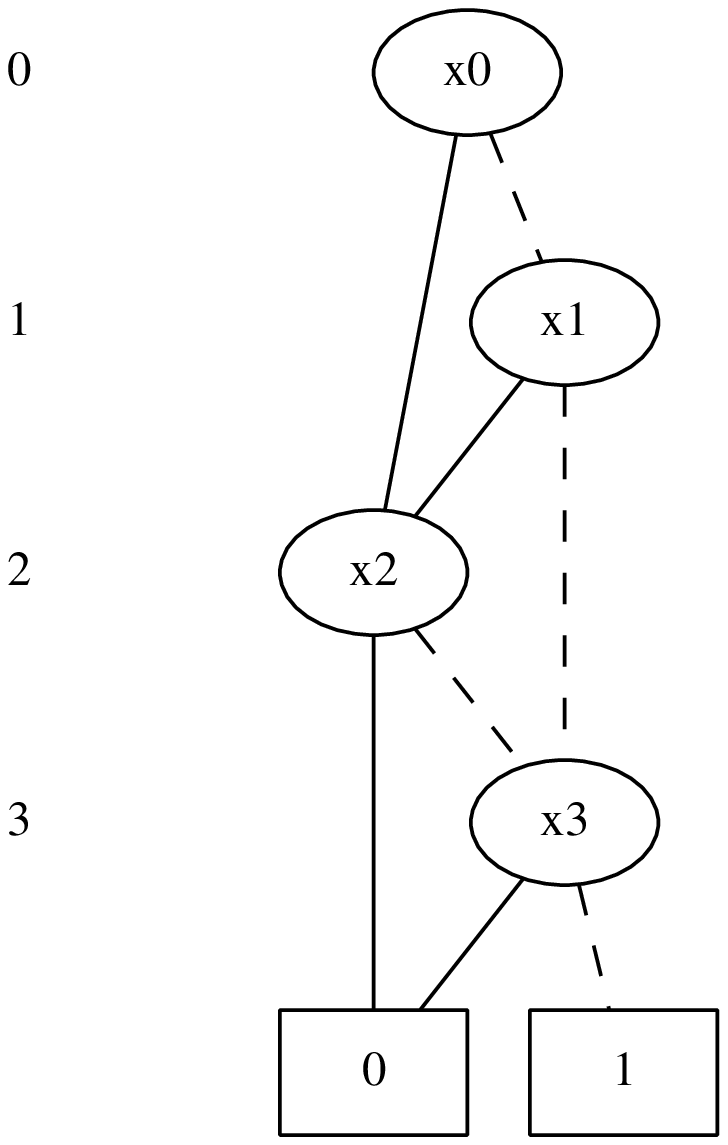}}
\caption{Examples of two ROBDDs with reconstruction cost  equals to 1.}
\label{fig:exampleROBDDCm1}
\end{figure}

\begin{figure}[t]
\begin{center}
\includegraphics[width=0.30\textwidth]{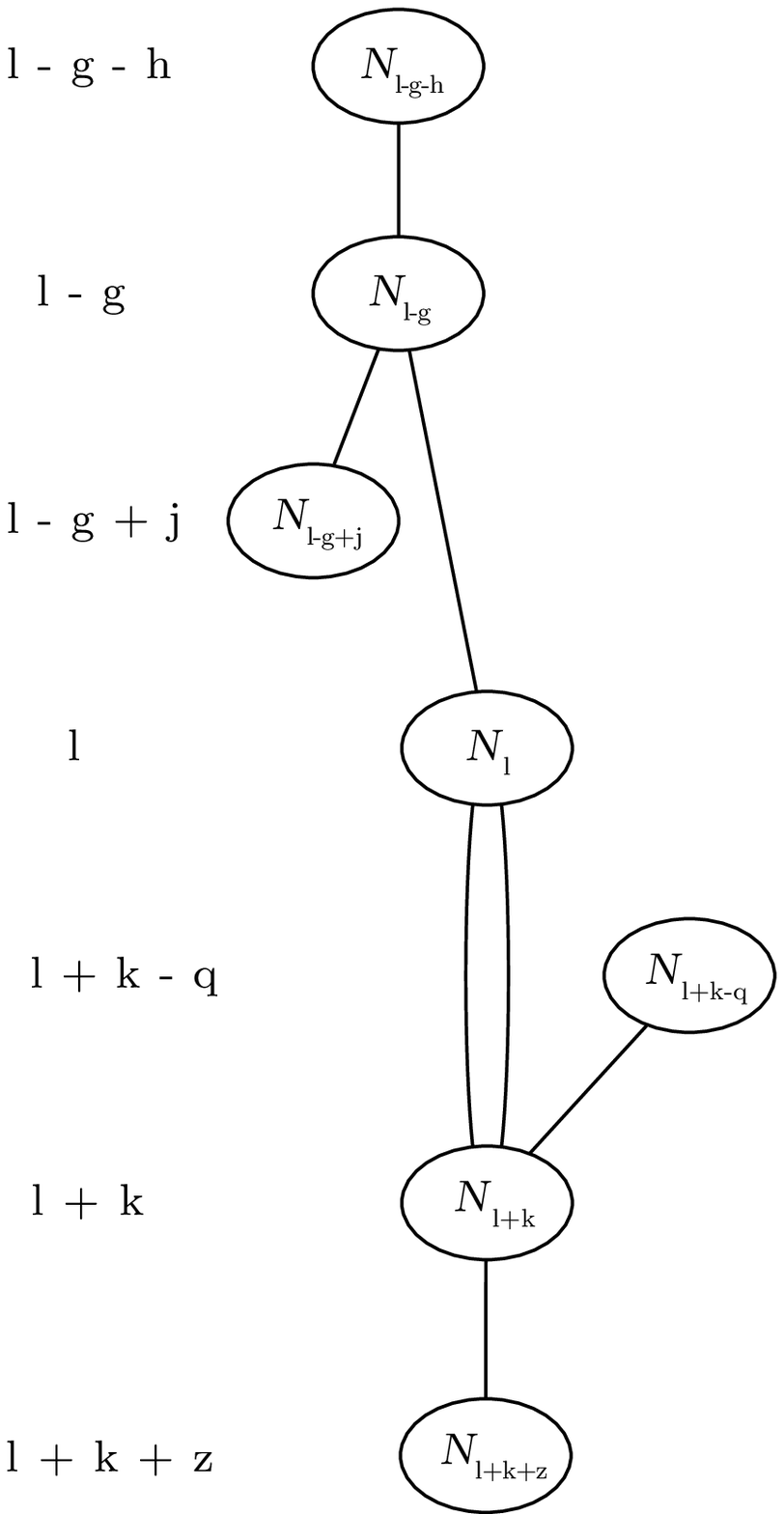}
\end{center}
\caption{Part of a ROBDD described in Theorem~\ref{the:reducereg1}.}
\label{fig:reducereg1}
\end{figure}

As these two examples (complete unreduced OBDD, and OBDD for the parity function) clearly suggest, the reconstruction cost increases whenever an OBDD contains paths, from the root to the terminal nodes,  shorter than $n$, i.e.,  paths lacking nodes from some level of the diagram. In fact, in this case, the range of the nodes possibly increases.
Such a condition is caused by the application of the reduction rules to the starting complete OBDD representing a given function.
Thus, let us examine the impact of such rules on the index reconstruction cost.
Let us start with the first rule, i.e.,  the merge of isomorphic subgraphs.
\begin{theorem}
\label{the:reducereg1}
Let $\{N_1, N_2, ..., N_m\}$ be the roots of the isomorphic subgraphs to be merged, and let  $N_i \in \{N_1, N_2, ..., N_m\}$ be the node that will be kept in the diagram after the merge. The application of this reduction rule improves the overall index reconstruction cost $C_t$ that decreases by 
a value equal to $\sum_{j=0, j \not= i}^{m} C(N_j)$.
\end{theorem}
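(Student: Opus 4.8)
I would prove this by computing the change $\Delta C_t = C_t(B') - C_t(B)$ directly, where $B$ is the diagram before the merge and $B'$ the one after, exploiting the fact that $C_t$ is a sum of the purely local quantities $C(N)=|I_N|=i_C-i_P-1$, with $i_C$ the least index among $N$'s children and $i_P$ the greatest index among $N$'s parents. Since $C(N)$ depends on a node only through the indices of its immediate parents and children, the merge — a local rewrite — can alter the cost of only a small set of nodes; the whole task is to identify that set and check that every surviving term cancels, leaving only the contribution of the deleted nodes.

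First I would record the structural facts available from the setup. Because $N_1,\dots,N_m$ are the roots of isomorphic, already-reduced subgraphs, they all carry the same index, say $k$, and share the same $0$-child and the same $1$-child; the rule deletes the $m-1$ nodes $N_j$ with $j\neq i$ and redirects each of their incoming edges to $N_i$. Deleting these $m-1$ nodes removes exactly $\sum_{j\neq i}C(N_j)$ from the sum defining $C_t$, so it remains only to show that no surviving node changes its cost.

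I would then partition the surviving nodes into three groups. (a) The common children of the $N_j$'s: each loses the parents $N_j$ ($j\neq i$) but keeps $N_i$, and since every removed parent and the retained parent $N_i$ have the same index $k$, the maximum parent index of the child is unchanged, so its range is preserved. (b) The parents of the merged nodes: every redirected edge formerly pointed to some $N_j$ of index $k$ and now points to $N_i$, again of index $k$, so the minimum child index of each such parent, and hence its range, is unchanged. (c) Every node untouched by the rewrite trivially keeps its cost. This leaves only the kept node $N_i$.

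The main obstacle is exactly case (c). The children of $N_i$ are unchanged, so its lower endpoint $i_C$ is fixed, but after the merge $N_i$ inherits the parents of all the $N_j$'s, so I must argue that its upper endpoint $i_P$ — the largest index among its parents — does not move; a priori one of the absorbed parents could lie at a deeper level than any original parent of $N_i$ and shrink $I_{N_i}$, which would make the total decrease strictly larger than claimed. I would close this case by working in the setting in which the rule is studied here, namely reduction of the complete OBDD (as introduced just before the statement): there every node at level $k$ has all its parents at level $k-1$, so $i_P(N_i)=k-1$ both before and after, and $C(N_i)$ is invariant. Combining the vanishing of the $m-1$ deleted terms with the invariance of all surviving terms yields $\Delta C_t = -\sum_{j\neq i}C(N_j)$, the claimed decrease.
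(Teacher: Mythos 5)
Your argument follows the same route as the paper's proof: both treat $C_t$ as a sum of local quantities and check, node by node, that the ranges of the parents and of the (common) children of the deleted $N_j$ are unchanged because every redirected edge ends at a node with the same index $l$, so the only net effect on $C_t$ is the removal of the $m-1$ terms $C(N_j)$, $j\neq i$. Where you differ is that you explicitly isolate the one surviving node whose range is \emph{not} obviously preserved, namely the retained node $N_i$ itself: after the merge it absorbs the parents of every deleted $N_j$, so its upper parent index $i_P$ can only increase, and if some absorbed parent sits on a level strictly below all of $N_i$'s original parents then $|I_{N_i}|=i_C-i_P-1$ strictly shrinks and the total decrease exceeds $\sum_{j\neq i}C(N_j)$. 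The paper's proof is silent on this case, so you have found a real gap in it rather than introduced one. Your patch --- invoking the setting in which every node at level $l$ has all parents at level $l-1$, as in the complete (or quasi-reduced) OBDD from which the reduction starts --- does close the case, but it is a hypothesis that is not in the theorem statement, and the companion Theorem~\ref{the:reducereg2} explicitly allows parents at arbitrary levels, so the reduction is not confined to that setting. The honest general statement is that $C_t$ decreases by \emph{at least} $\sum_{j\neq i}C(N_j)$, with equality when no absorbed parent of $N_i$ lies strictly below all of its original parents; this only strengthens the qualitative conclusion the paper actually uses later (the merge rule never increases the reconstruction cost). You should either state the result as an inequality or make the extra hypothesis explicit; as written, your proof is more careful than the paper's but proves the exact equality only under an assumption the theorem does not grant.
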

\begin{proof}
First observe that the roots of the  isomorphic subgraphs  $\{N_1, N_2, ..., $ $ N_m\}$ have the same index value,  are all on the same level $l$, and share the same children. After the application of the rule, $m-1$ of these nodes are deleted from the OBDD, and the edges pointing to them are all redirected to the only root node that is kept. Thus, the ranges of the parents of the deleted nodes do not change, as they are redirected to a node on the same level and with the same index of their original child; and for the same reason, the ranges of the children of the deleted nodes do not change either. 
Finally, note that the application of the rule allows us to subtract from the overall cost $C_t$ the index reconstruction cost of all deleted nodes. 
\hfill\end{proof} 

Let us now examine the impact of the second rule, i.e., the deletion of nodes with both edges pointing to the same OBDD node.
Let $N_l$ be a node at level $l$ whose 0-edge and 1-edge point to the same node $N_{l+k}$ on level $l+k$, with $k > 0$. Let $N_{l+k+z}$ be the child of $N_{l+k}$ with minimum index  $l+k+z$ with $z>0$. Let  ${N_{l-g_1}, N_{l-g_2}, \ldots, N_{l-g_r}}$ be the $r$ parents of  $N_l$, with $N_{l-g_i}$ on level  $l-g_i$, with  $ 0<g_i \leq l$, for $1 \le i \le r$. Let $N_{l-g_i-h_i}$ be the parent of $N_{l-g_i}$ with maximum index $l-g_i-h_i$, with $ 0<h_i< l-g_i$. Let $N_{l-g_i+j_i}$ be the other child  of $N_{l-g_i}$ on level $l-g_i+j_i$, with $j_i>0$. Finally, let $N_{l+k-q}$ ($N_{l+k-q} \neq N_l$) be the parent of $N_{l+k}$ (if exists) with the highest index $l+k-q$ with $0 < q \leq l+k$.
Figure~\ref{fig:reducereg1} shows the portion of diagram described above.  
\begin{theorem}
\label{the:reducereg2}
After the application of the deletion rule, the overall index reconstruction cost of the OBDD changes for an amount $\delta \in [-\min(g_1,g_2,\ldots,g_r)$ $ - k -1, k (r -1)+1].$
\end{theorem}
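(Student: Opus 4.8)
The plan is to track the change $\delta$ in the overall cost $C_t$ by isolating exactly which nodes have their range $I_N$ (and hence cost $C(N)=|I_N|$) altered by the single deletion of $N_l$. Deleting $N_l$ removes it and redirects every edge entering $N_l$ so that it enters $N_{l+k}$ instead. First I would show that the only nodes whose ranges can change are: the node $N_l$ itself, which disappears; each parent $N_{l-g_i}$, whose affected child moves from level $l$ to level $l+k$; and the node $N_{l+k}$, whose parent set is modified (it loses $N_l$ and gains the $N_{l-g_i}$'s). For every other node both the parent set and the child set are untouched, so its range, and thus its cost, is unchanged. This yields the decomposition
$$\delta = -C(N_l) + \sum_{i=1}^{r} \Delta C(N_{l-g_i}) + \Delta C(N_{l+k}),$$
where $\Delta C(\cdot)$ denotes the signed change of the cost of a surviving node.

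Next I would evaluate each term, writing $g=\min(g_1,\ldots,g_r)$. The node $N_l$ has maximum parent index $l-g$ and minimum child index $l+k$, so $C(N_l)=(l+k)-(l-g)-1=k+g-1$, and this is removed in full. For a parent $N_{l-g_i}$ the maximum index of its own parents is unchanged (it is still the level of $N_{l-g_i-h_i}$), while its minimum child index is the minimum of $l-g_i+j_i$ (the unchanged other child) and the level of the redirected child, which moves from $l$ down to the deeper level $l+k$. Since the redirected child only goes deeper, this minimum can only increase, and a short case distinction on the position of $l-g_i+j_i$ relative to $l$ and $l+k$ gives $\Delta C(N_{l-g_i})\in[0,k]$, hence $\sum_i \Delta C(N_{l-g_i})\in[0,rk]$.

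For $N_{l+k}$ the minimum child index $l+k+z$ is unaffected, while its maximum parent index can only drop: the parent $N_l$ at level $l$ is replaced by the parents $N_{l-g_i}$ whose deepest level is $l-g\le l$, and the remaining original parents (top level $l+k-q$) are untouched. Comparing $\max(l,\,l+k-q)$ before with $\max(l-g,\,l+k-q)$ after gives $\Delta C(N_{l+k})\in[0,g]$, the extreme $g$ occurring when $N_l$ is the highest-level parent of $N_{l+k}$. Combining the three estimates, the lower extreme is $\delta\ge -(k+g-1)\ge -g-k-1$ and the upper extreme is $\delta\le -(k+g-1)+rk+g=k(r-1)+1$, which are exactly the claimed bounds.

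I expect the main obstacle to be the bookkeeping in the last two steps: one must verify rigorously that no other node's range is affected, including degenerate configurations such as $N_{l+k}$ having $N_l$ as its unique parent or a parent $N_{l-g_i}$ whose other child already coincides with $N_{l+k}$, and then carry out the case distinctions cleanly. For the upper bound in particular I would check that the two extremes used — every parent contributing its full $+k$ and $N_{l+k}$ contributing $+g$ — can be realized simultaneously (they can, since the first requires each parent's other child to lie at level at least $l+k$ and the second requires the pre-existing parents of $N_{l+k}$ to lie at level at most $l-g$, which are independent conditions), so that the stated interval is attainable and not merely an overestimate.
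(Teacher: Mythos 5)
Your proof is correct and follows essentially the same route as the paper's: isolate the only nodes whose ranges can change ($N_l$, its parents $N_{l-g_i}$, and $N_{l+k}$), and bound the signed change of each range separately to get $-|I_l|$ as the lower extreme and $-|I_l|+rk+\min(g_1,\ldots,g_r)$ as the upper one. The only divergence is that you compute $|I_l|=k+g-1$ consistently with the definition of node range, yielding the slightly tighter lower bound $-(k+g-1)$, whereas the paper's best-case analysis uses $|I_l|=k+g+1$ to match the stated endpoint $-g-k-1$; either way the claimed containment follows.
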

\begin{proof}
Before the deletion of $N_l$, the ranges $I_{l-g_i}, I_l, I_{l+k}$ of the nodes $N_{l-g_i}, N_l, N_{l+k}$, respectively, are (see Figure~\ref{fig:reducereg1}): 
\begin{eqnarray*}
&\ &I_{l-g_i} = [l-g_i-h_i, \min(l, l-g_i+j_i)]\,,\\
&\ &I_l = [l-\min(g_1, g_2, \ldots, g_r), l+k]\,,\\
&\ &I_{l+k} = [\max(l,l+k-q), l+k+z]\,.
\end{eqnarray*}

After the deletion of $N_l$, these ranges change as follows (see Figure~\ref{fig:reducereg1}):
\begin{eqnarray*}
&\ &I_{l-g_i} = [l-g_i-h_i, \min(l+k, l-g_i+j_i)]\,,\\
&\ &I_{l+k} = [\max(l-\min(g_1, g_2, \ldots, g_r), l+k-q), l+k+z]\,.
\end{eqnarray*}
$I_l$ is empty since $N_l$ has been deleted. 
The lower bound for $I_{l-g_i}$ does not change, since the deletion happened at a lower level, while the upper bound now depends on the child of $N_l$. On the other hand, $I_{l+k}$ maintains its upper bound, but its lower bound now depends on the parent of $N_l$ with highest index. 
Note that the ranges of the other nodes do not depend on $l$, thus they do not change after the deletion of $N_l$.

The best case happens when no range depends on $N_l$, that is $\forall{i \in r}: l-g_i+j_i \leq l$ and $l \leq l+k-q$. In this case, the ranges $I_{l-g_i}$ and $I_{l+k}$ do not change. Since $N_l$ is removed, the overall index reconstruction cost $C$ is reduced by
$$|I_l| = \min(g_1,g_2,\ldots,g_r) + k +1.$$ 

The worst case happens when all the parents and the child of $N_l$ have a range depending on $N_l$, i.e., $\forall{i \in [1,\ldots, r]}: l-g_i+j_i > l$ and $l > l+k-q$. In this case, the deletion of $N_l$ implies a change of the ranges $I_{l-g_i}$ and $I_{l+k}$. The upper bound for each $N_{l-g_i}$ is now $\min(l-g_i+j_i, l+k)$. Thus, $|I_{l-g_i}|$ is increased by $\min (j_i-g_i,k) $ for each $1\leq i \leq r$. Therefore, the increase due to $|I_{l-g_i}|$ for all ${i \in [1,\ldots, r]}$, is 
$$\sum_{i=1}^r \min (j_i-g_i,k) \leq  \, \sum_{i=1}^r k = r\,  k.$$

 The increase of $|I_{l-g_i}|$ is then upper-bounded by $r \, k$. Moreover, the lower bound for
$N_{l+k}$ changes from $l$  to $\max(l-\min(g_1, g_2, \ldots, g_r), l+k-q)$. Thus, the increase of $|I_{l+k}|$ is
$$ l - \max(l-\min(g_1, g_2, \ldots, g_r), l+k-q) \leq \min(g_1, g_2, \ldots, g_r).$$
Finally, since $|I_l| =  \min(g_1,g_2,\ldots,g_r) + k -1$ and $N_l$ is deleted, the value of $C$, in the worst case, is increased by: 
$$ k \, r + \min(g_1, g_2, \ldots, g_r) - \min(g_1, g_2, \ldots, g_r) - k + 1 = k ( r -1)+1.$$
\hfill\end{proof}

Note that the use of the deletion rule does not always increase the index reconstruction cost. For example, consider the reduced OBDD $B$ in Figure~\ref{fig:exampleCm1}. While the reduction of $B$ involved both the merge and deletion rules, its index reconstruction cost, $C_m(B)$, is equal to  $1$. In fact, each node containing the variable $x_i$ with $0<i<3$ has at least a parent containing the variable $x_{i-1}$ and  a child  containing the variable $x_{i+1}$. Moreover, the node corresponding to $x_0$ has a child containing the variable $x_{1}$, and the node containing $x_{3}$ has a parent containing the variable $x_{2}$. Nevertheless, while the merge rule never increases the index reconstruction cost $C_m(B)$, the deletion rule can increase it, as shown in the reduced OBDD of Figure~\ref{fig:exampleBDD}, where the node $c$ containing the vertex $x_1$ has range  $I_c = [1, 3]$, thus its reconstruction costs 3. This means that the elimination of one of its children containing the variable $x_2$ increased the index reconstruction cost.

\section{Index-Resilient OBDDs}
\label{IR}
The analysis of the previous section has shown how the reconstruction of a corrupted index could  be quite onerous, as a consequence of the process of reduction of an  OBDD. In particular, while the merge rule never increases the overall index reconstruction cost, the application of the deletion rule could increase it. In this section, we  describe a new OBDD model where we maintain some redundancy, that is we keep some redundant nodes in the diagram, in order to guarantee  a constant index reconstruction cost  for each node. 
In particular we will define an OBDD, called {\em index-resilient reduced OBDD}, satisfying  the following properties:
\begin{enumerate}
\item the index reconstruction cost of each node $N$ is $C(N)=1$;
\item any node with a faulty index has a  reconstruction cost $O(r)$, where $r$ is the number of nodes with a corrupted index in the OBDD;
\item the pointers to the  parents of a node are never used;
\item the indices can be restored without using the unique table;
\item the new OBDD is canonical.
\end{enumerate}
Observe that, Property~3 guarantees that for the reconstruction of the index of a node $N$ we do not need to know the indices of its parents. This is very important since the number of parents of a node $N$ in a OBDD can be exponential in the number of variables; indeed, it can be $O(m)$, where $m$ is the total number of nodes in the OBDD, and, in the worst case, $m \in \Theta (2^n/n)$~\cite{LL92}.
 
Let us start with a simple observation: since the deletion rule can increase the index reconstruction cost, we could decide not to apply this rule during the reduction of an OBDD.  In this way, we have  clearly a cost $C(N) =1$ for each node $N$ in the OBDD. 
An OBDD that is reduced using exclusively the merge rule is called {\em quasi-reduced OBDD}~\cite{LL92}.

An important property of quasi-reduced OBDD is that each node at level $i$ has all parents at level $i-1$ and all children on level $i+1$.
For example, consider the OBDD in Figure~\ref{fig:quasi-red}. This OBDD has been reduced using the merge rule only.

Once we fix a variable ordering, it is easy to verify that quasi-reduced OBDDs are canonical forms.
Quasi-reduced OBDDs are an interesting solution since the growth of the number of nodes, with respect to a reduced OBDD, is not very significant, as statistically studied in~\cite{LL92}.  
Thus, quasi-reduced OBDDs are still a compact representation  and  could represent a convenient and canonical trade-off between memory saving, reduction time and error reconstruction time.

However, as we have already observed for the reduced OBDD $B$ in Figure~\ref{fig:exampleCm1},  the use of the deletion rule does not always increase the index reconstruction cost. In other words, it is still possible to delete some redundant node in a quasi-reduced OBDD guaranteeing that, in the final OBDD, the index reconstruction cost of each node  $N$ is still $C(N)= 1$.  
Most importantly, as we will show in this section, it is possible to have a {\em canonical} OBDD, {\em more compact} than a quasi-reduced one, and with  a cost $C(N) =1$ for each node $N$. 

For this purpose, we define a new class of OBDDs:
\begin{definition}[Index-Resilient OBDD] \label{IROBDD}
An {\em Index-Resilient} OBDD is an OBDD with no mergeable nodes, where each internal node $N$ on level $i$  has at least  one child on level $i+1$, for any level of the OBDD.
\end{definition}

In particular, a quasi-reduced OBDD is an index-resilient OBDD where each node on level $i$ has {\em all}  parents on level $i-1$ and {\em all} children on level $i+1$.

Observe that the {\em index reconstruction cost}  for any node $N$ in an index-resilient OBDD is  $C(N) = 1$, since the variable index of a node $N$ is directly given by $i = \min \{i_0,i_1\} - 1$ where $i_0$ and $i_1$ are the levels of the 0- and 1-child of $N$. 
Note also that for the reconstruction of the index of $N$ we do not need to know the indices of its parents (whose number is not a priori known), but only the indices of its children that are always 2 in number. 

To compute a compact index-resilient OBDD, we start from a quasi-reduced one deleting some redundant nodes while preserving the index-resilient property.
In order to efficiently test whether we can delete a redundant node $N$, 
we need the following parameter:
\begin{definition} \label{numP}
Let $B$ be an index-resilient OBDD and let $N$ be a redundant node in $B$. The parameter  $numP(N)$ is the number of parents $P$ of $N$ satisfying {\em at least one} of the following properties: 
\begin{enumerate}
\item $P.0$-child and $P.1$-child are redundant (possibly, $P.0$-child = $P.1$-child) and $N = P.1$-child;
\item $P$ has another child $N'\neq N$  on a level strictly greater than $i+1$, where  $i$ is the level of $P$. 
\end{enumerate}
\end{definition}
Note that this parameter is not defined for non redundant nodes. Moreover, if $N$ is the root and is redundant, then $numP(N)=0$.
Finally, observe that in a quasi-reduced OBDD there are no nodes $P$ satisfying the second property, as all children of any node are on the level immediately below it. 

The parameter $numP(N)$ counts the number of parents, of a redundant node $N$, whose cost is affected by the deletion of $N$. In fact, the cost $C(P)= 1$, of a node $P$ at level $i$, is not increased by the deletion of one of its children $N$ in the unique case when $P$ has the other child $N'$, on level $i+1$, that cannot be removed. The child $N'$ is not removed in two possible cases: 1) $N'$ is not redundant; 2) $N'$ is redundant (like $N$) but is the 1-child of $P$. The second criterion is an arbitrary choice due to the necessity of deleting one of the two redundant children of a node $P$ while maintaining the index reconstruction cost and the canonicity of the representation. More precisely, when a node $P$ has two redundant children, one of them can be removed without changing the cost of $P$. In this paper we always remove the 0-child of $P$ in order to guarantee that the resulting OBDD is canonical (see Theorem~\ref{canon}). 
Observe that a  redundant 0-child is not always  removed, since this node could be  a non removable one due to other parents' constraints. 
The choice of removing the 1-children is analogous.
For example, see the quasi-reduced OBDD in Figure~\ref{fig:quasi-red}. Each redundant node $N$ in the figure has a value that corresponds to $numP(N)$.

This parameter can be efficiently computed with a depth first or a breadth first visit of a quasi-reduced OBDD.  

When the quasi-reduced OBDD is constructed and $numP$ is computed, we can characterize chains of redundant nodes that can be removed, maintaining equal to 1 the index reconstruction cost of each remaining node. Consider, for example, the portion of an OBDD in Figure~\ref{fig:chain1}, the chain of redundant nodes from node $x_5$ to node $x_7$ can be removed, since the cost of the remaining nodes is not affected by the deletion. In fact, each remaining internal node on level $i$ still has, at least, a child on level $i+1$.
The same happens for the chain of redundant nodes from node $x_5$ to node $x_7$ in the  OBDD portion in Figure~\ref{fig:chain2}. Note that, in this OBDD, the node $x_5$ on the right has two distinct redundant children. The only child that can be removed is the 0-child that is in the chain. On the contrary, the chain from $x_5$ to $x_7$ in Figure~\ref{fig:chain3} cannot be completely removed because $x_7$ is a 1-child, with a redundant sibling, of the node $x_6$ on the right. In this case only the chain from node $x_5$ to node $x_6$, together with the redundant node $x_7$ on the right, can be removed. 

Our purpose is to delete chains of redundant nodes in a quasi-reduced OBDD without increasing the index reconstruction cost. We therefore introduce the concept of {\em removable chain}.
\begin{definition} [Removable chain]
\label{remC}
A {\em removable chain} in an index-resilient OBDD is a chain $C= N_1, N_2, \ldots, N_k$ (with $k\geq 1$) of redundant nodes such that:
\begin{enumerate}
\item $numP(N_1) = 0$,
\item $\forall i \in [2,\ldots,k]$\,, $ numP(N_i) = 1$,
\end{enumerate}
The node $N_1$ is called {\em head} of the chain, and the unique child $M$ of $N_k$ is called  {\em child} of the chain.
\end{definition}
The first requirement states that the head of the chain $N_1$  can only have  non redundant siblings $N'$, or redundant siblings $N' \neq N_1$ that are the 1-child of their parents. Moreover, all siblings of $N_1$ lye on the level immediately below the level of their parents. Note that this requirement implies that all parents of $N_1$ are not redundant.

The second requirement states  that the same property holds for any other node $N$ of the chain, with the only difference that now $N$ can have one redundant parent: the node above it in the chain.
As a consequence of these two conditions, only the 0-child of a node $N$ with two different redundant children  will be possibly deleted from the OBDD, while the 1-child will be kept  to maintain the node range of $N$. 
Note that when the chain is composed by a single redundant node $N$, we have that $N$ is removable when $numP(N) =0$.

\begin{figure}[t]
\begin{center}
\subfigure[\label{fig:chain1}]{\includegraphics[width=0.25\textwidth ]{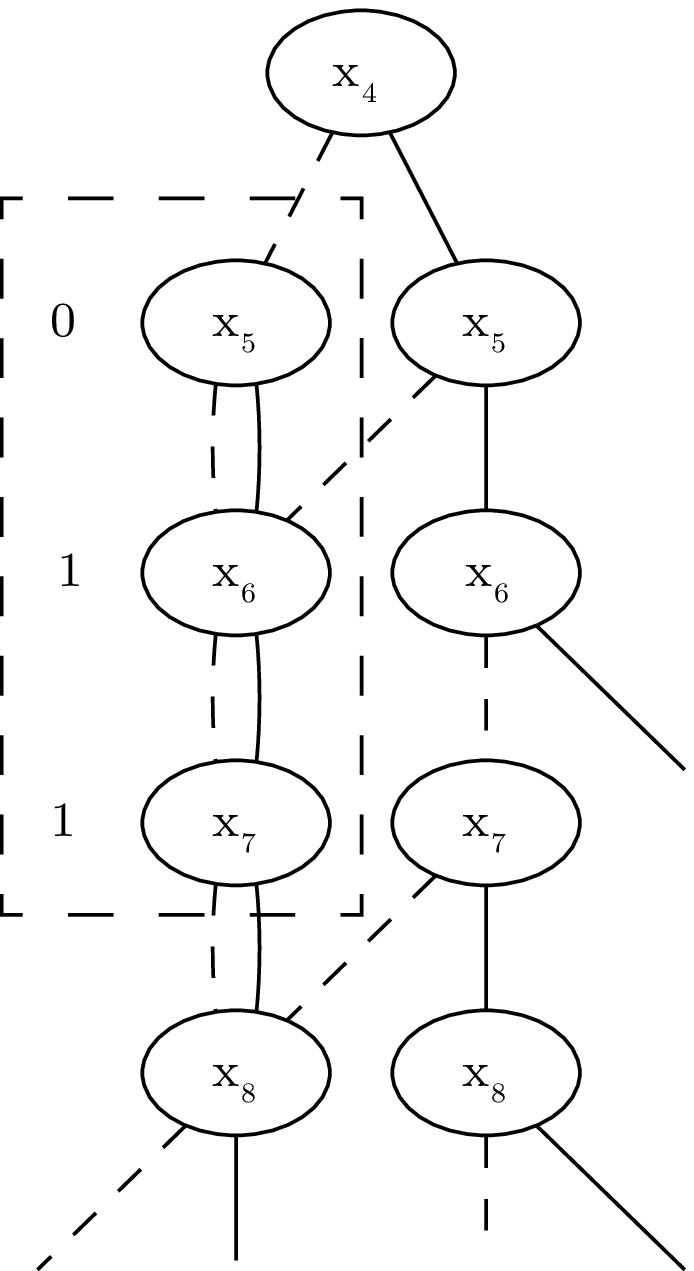}}
\hspace{6mm}
\subfigure[\label{fig:chain2}]{\includegraphics[width=0.25\textwidth ]{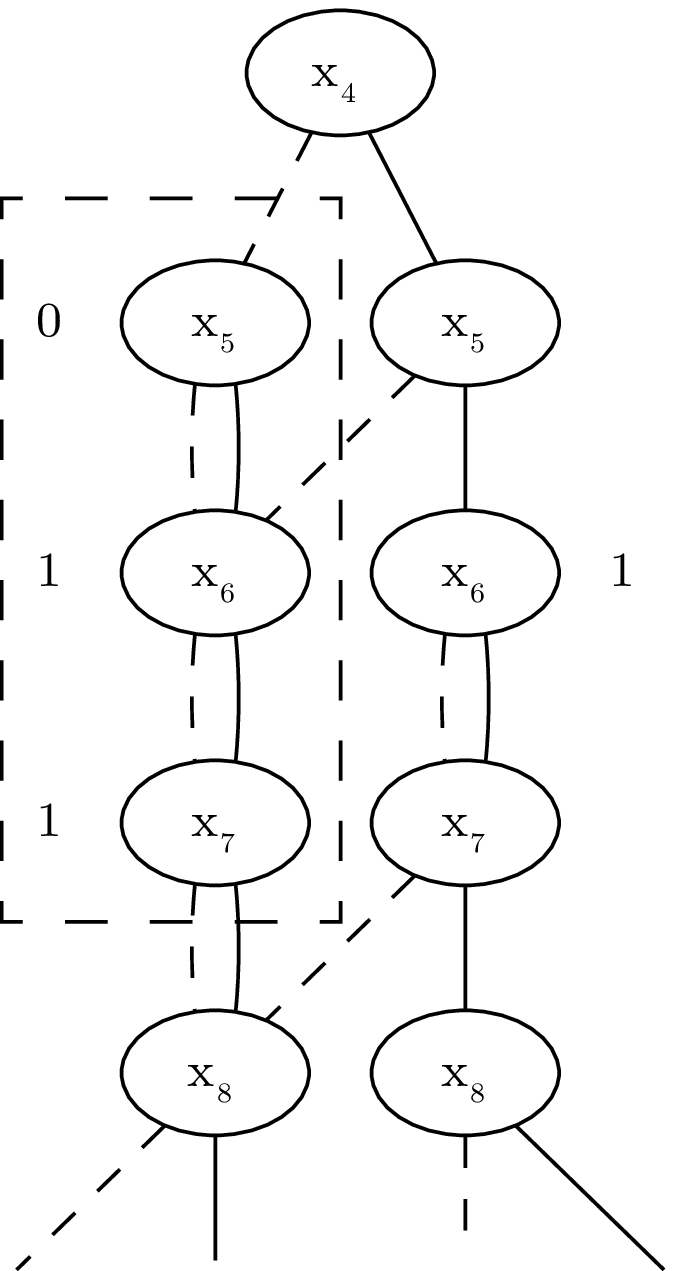}}
\hspace{6mm}
\subfigure[\label{fig:chain3}]{\includegraphics[width=0.25\textwidth ]{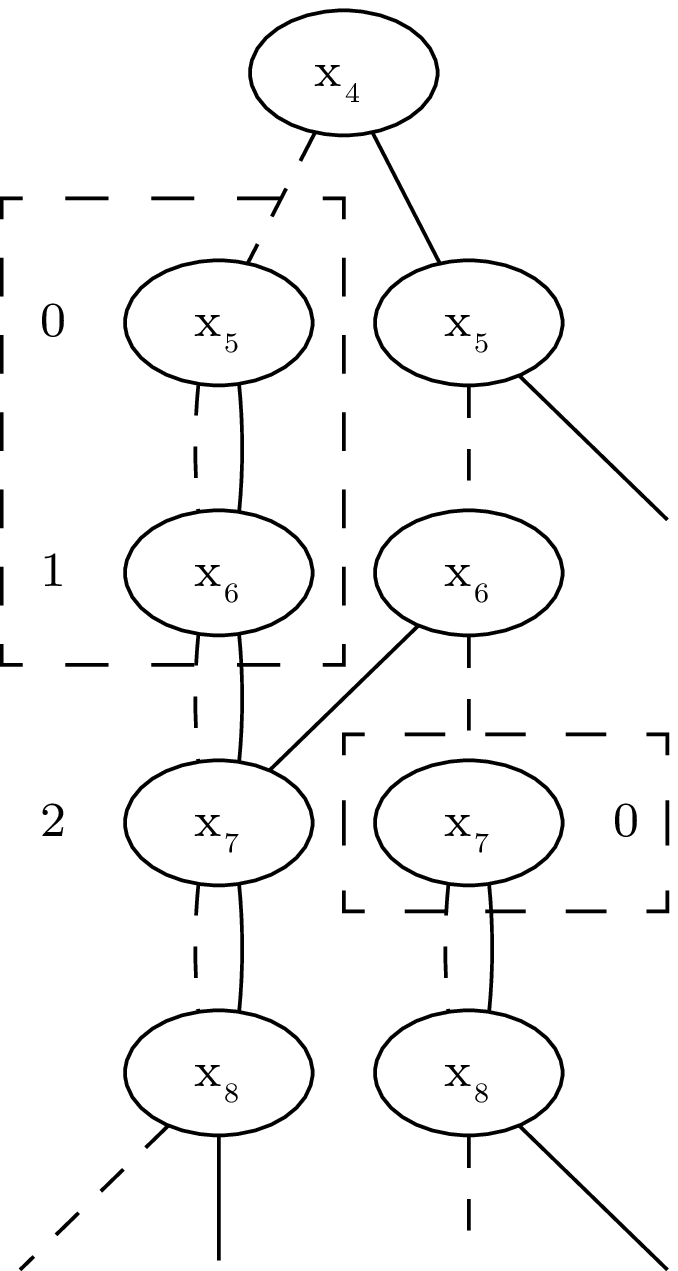}}
\end{center}
\caption{Examples of possible chains. Removable chains are in dashed boxes.}
\label{fig:chains}
\end{figure}

\begin{definition} [Maximal removable chain]
\label{MremC}
A removable chain is {\em maximal} if it cannot be further extended, i.e., the child $M$ of the chain, if redundant, does not satisfy condition 2 of Definition~\ref{remC}.  
\end{definition}

In the following proposition we show that in an index-resilient OBDD there are no maximal ``crossing'' chains. 

\begin{proposition} 
A node $N$ in an index-resilient OBDD cannot be part of two different maximal chains.
\end{proposition}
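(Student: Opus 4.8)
The plan is to show that the maximal removable chain passing through a given redundant node $N$ is \emph{uniquely determined}, so that two maximal chains sharing $N$ must in fact be the same chain. The argument rests on two rigidity facts — one for the downward direction of a chain, one for the upward direction — together with the observation that the role a node can play in any chain is pinned down by its value $numP$.

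First I would record the downward rigidity. By Definition~\ref{remC} every node of a removable chain is redundant, and a redundant node has a single child (its $0$- and $1$-edges coincide). Hence, if a chain passes through $N$, the node immediately below $N$ is forced to be this unique child, and iterating this shows that the whole portion of the chain lying below $N$ is determined with no freedom. For the upward direction, consider an internal chain node $N_i$ (with $i\geq 2$), so that $numP(N_i)=1$. Its predecessor $N_{i-1}$ is redundant with unique child $N_i$, whence $N_{i-1}.\mbox{0-child}=N_{i-1}.\mbox{1-child}=N_i$ with $N_i$ redundant and $N_i=N_{i-1}.\mbox{1-child}$; this is precisely property~1 of Definition~\ref{numP}, so $N_{i-1}$ is one of the parents counted by $numP(N_i)$. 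Since $numP(N_i)=1$, it is the \emph{only} such parent, and therefore the predecessor of $N_i$ in any chain is uniquely determined. Next I would note the role rigidity: a head of a chain satisfies $numP=0$ while every internal node satisfies $numP=1$, so since $numP$ is a fixed attribute of each node, a node with $numP=0$ can only ever be a head and a node with $numP=1$ can only ever be internal.

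With these facts in hand, I would assume for contradiction that $N$ lies on two distinct maximal chains $C$ and $C'$, with $N=N_i=N'_j$. By role rigidity $N$ plays the same role in both chains. If $N$ is a head in both, the upward portions are just $N$ itself; if $N$ is internal in both, the uniqueness of the $numP$-counted predecessor forces $N_{i-1}=N'_{j-1}$, and walking upward one step at a time (each shared node keeping its forced role until one reaches the head with $numP=0$) shows the two upward portions coincide. The downward portions coincide by the forced-unique-child argument, and maximality (Definition~\ref{MremC}) synchronizes the endpoints: were $C$ to stop at $N$ while $C'$ continued to a child $M$, then $M$ would be a redundant chain node with $numP(M)=1$, contradicting maximality of $C$, whose child $M$ must — if redundant — fail condition~2 of Definition~\ref{remC}. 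Thus $C$ and $C'$ have identical heads, identical tails, and the same endpoints, so $C=C'$, contradicting distinctness.

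The delicate step, and the one I would state and verify as a separate lemma before running the case analysis, is the upward argument: checking that the chain predecessor is always captured by property~1 of Definition~\ref{numP} and that the constraint $numP=1$ makes it the \emph{unique} counted parent. The downward direction and the role rigidity are essentially immediate once the definitions are unwound, but the upward uniqueness is what actually rules out a node belonging to two genuinely different maximal chains.
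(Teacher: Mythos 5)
Your proof is correct, and its overall shape matches the paper's: both arguments establish that the chain through a node is rigid downward (a redundant node has a unique child) and rigid upward (the chain predecessor is unique), and then use maximality to pin down the endpoints. The interesting difference is in how the upward rigidity is obtained. The paper proves the stronger structural fact that \emph{no} node in an index-resilient OBDD can have two distinct redundant parents: two such parents on the same level would be mergeable (contradicting Definition~\ref{IROBDD}), and two on different levels are impossible because a redundant node on level $i$ must have its unique child on level $i+1$. You instead derive uniqueness of the predecessor only for chain nodes, by observing that any redundant parent $P$ of a redundant node $N$ necessarily satisfies property~1 of Definition~\ref{numP} (both children of $P$ equal $N$, which is redundant, and $N=P.1$-child), so every such parent is counted in $numP(N)$; the constraint $numP(N_i)=1$ for internal chain nodes then forces the predecessor to be unique. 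Both routes are sound. The paper's buys a global structural invariant of index-resilient OBDDs that is reused implicitly elsewhere (e.g., it explains why $numP$ is well behaved at all), while yours is more self-contained, leaning only on the definitions of $numP$ and removable chain, and it makes explicit the role-rigidity and endpoint-synchronization steps that the paper compresses into ``from these properties we have that\ldots''. Your added care on the maximality synchronization (one chain cannot stop at $N$ while the other continues, since the continuing chain would certify $numP(M)=1$ for the child $M$, contradicting maximality of the first) is a detail the paper leaves to the reader and is worth stating.
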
 
\begin{proof}
We can observe that in an index-resilient OBDD there are no nodes with two different redundant parents. This is due to the fact that a node cannot have two redundant parents either on the same level (they would be mergeable) or on different levels, as any node must have at least a child on the level immediately below it. 
Moreover, since any internal node in an OBDD has two children (possibly the same node), there are no redundant nodes with two distinct children. 
Finally, since each chain has a head $N_1$ such that $numP(N_1) =0$ (i.e., $N_1$ has no  redundant parents), by definition of  maximal removable chain it is not possible that a maximal chain contains another maximal chain. 
From these  properties we have that any node in an index-resilient OBDD cannot be part of two distinct maximal chains. \hfill\end{proof}

The following proposition shows that the deletion of a removable chain in an index-resilient OBDD   does not change the overall index reconstruction cost, i.e., after the removal of the chain, each internal node on level $i$ still has at least a child on level $i+1$, for any level in the OBDD. 

\begin{proposition}\label{delRed}
Let $C = N_1, N_2, \ldots, N_k$, $k \ge 1$, be a removable chain in an index-resilient OBDD. The OBDD resulting from the deletion of $C$  is still index-resilient.
\end{proposition}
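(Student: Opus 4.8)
The plan is to verify that the diagram $B'$ obtained from $B$ by deleting the removable chain $C=N_1,\dots,N_k$ satisfies both requirements in Definition~\ref{IROBDD}: \textbf{(I)} it has no mergeable nodes, and \textbf{(II)} every internal node on level $i$ keeps at least one child on level $i+1$. I would first fix the semantics of the deletion: since every $N_i$ is redundant it has a single child, so removing the whole chain amounts to redirecting every edge that enters the chain (at any $N_i$) to the child $M$ of the chain. Consequently the only nodes whose pair of children changes are the \emph{external} parents of the chain nodes, namely the parents of $N_1$ together with the parents of each $N_i$ ($i\ge 2$) other than the in-chain predecessor $N_{i-1}$; every other node, and $M$ itself, keeps exactly the children it had in $B$, so for those nodes property \textbf{(II)} is inherited and ordering is preserved because each redirected edge is sent to the deeper node $M$.

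The core of the argument is property \textbf{(II)}, where the hypotheses $numP(N_1)=0$ and $numP(N_i)=1$ (Definition~\ref{numP}, Definition~\ref{remC}) do all the work. Let $P$ be an external parent, on level $j$, of a chain node $N_i$, so that the edge from $P$ to $N_i$ is redirected to $M$. I would first rule out that $P$ has both edges entering the chain: if $P$ pointed to two distinct chain nodes, then with respect to the shallower of them the other child would lie on a level strictly greater than $j+1$, so $P$ would satisfy the second clause of Definition~\ref{numP}; likewise, if $P$ were redundant with unique child $N_i$, it would satisfy the first clause. In either case $P$ would contribute to $numP(N_i)$, which is impossible, since that count is $0$ for $N_1$ and is already saturated by $N_{i-1}$ when $i\ge 2$. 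Hence $P$ has exactly one edge into the chain, and its other child $N'$ is a surviving (non-chain) node; since $P$ does not satisfy the second clause, $N'$ cannot lie on a level strictly greater than $j+1$, so, the diagram being ordered, $N'$ is exactly on level $j+1$. Thus after the redirection $P$ still owns the child $N'$ on level $j+1$, and property \textbf{(II)} holds for every affected node as well.

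It remains to check property \textbf{(I)}, that the redirection creates no mergeable pair, and this is the delicate point. The key observation is that each redundant $N_i$ computes the same Boolean function as $M$, so redirecting the incoming edges to $M$ leaves the function computed at every surviving node unchanged. I would then argue by contradiction: if two distinct nodes $P_1,P_2$ on the same level $j$ became mergeable in $B'$, a short inspection of the ways their children could have changed (at least one of them must point into the chain and be sent to $M$) shows that $P_1$ and $P_2$ already computed the same function in $B$. This contradicts the invariant that in an index-resilient OBDD two distinct nodes on the same level always compute distinct functions. Establishing and carrying this invariant is the main obstacle, because the full canonicity of the model is proved only later (Theorem~\ref{canon}); to avoid circularity I would prove it directly, as an invariant inherited from the canonical quasi-reduced OBDD out of which $B$ is built and preserved by each individual chain deletion. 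With \textbf{(I)} and \textbf{(II)} in hand, $B'$ is index-resilient.
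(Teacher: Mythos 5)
Your treatment of the second condition of Definition~\ref{IROBDD} (every internal node on level $i$ keeps a child on level $i+1$) is correct and is essentially the paper's own argument, carried out with more care: like the paper, you reduce everything to the external parents of the chain nodes, and you use $numP(N_1)=0$ and $numP(N_i)=1$ (the latter already ``saturated'' by $N_{i-1}$) to conclude that no external parent can satisfy either clause of Definition~\ref{numP}, hence each such parent owns a surviving sibling $N'$ on the level immediately below it. This is sound, and your explicit elimination of the case where a parent sends both edges into the chain is a detail the paper glosses over.

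The gap is in your part \textbf{(I)}. The invariant you lean on --- that in an index-resilient OBDD two distinct nodes on the same level compute distinct functions --- is false under the stated hypotheses: Definition~\ref{IROBDD} does not force it, and one can build an index-resilient OBDD with nodes $A$ (children $N_1$, $Y$) and $B$ (children $M$, $Y$) on the same level, where $N_1$ is a removable redundant node with unique child $M$; deleting the chain $N_1$ makes $A$ and $B$ mergeable. Indeed the paper itself exhibits exactly this phenomenon in Figure~\ref{piccoloProblema}: applying the chain-deletion algorithm to an index-resilient OBDD that is not quasi-reduced can produce mergeable nodes, which is precisely why Section~\ref{oper} inserts a re-quasi-reduction and merge step before reducing. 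So the ``no mergeable nodes'' half of the conclusion is not provable in the generality of the statement; the paper's proof silently drops it and establishes only the level property (``each internal node on level $i$ still has at least a child on level $i+1$''). Your proposed repair --- proving the invariant only for OBDDs obtained from a quasi-reduced one by successive chain deletions --- does work (a bottom-up induction gives the invariant for quasi-reduced OBDDs without mergeable nodes, and chain deletion preserves the function computed at every surviving node), but it strengthens the hypothesis of the proposition rather than proving it as stated. The honest conclusion is that your instinct to check clause \textbf{(I)} exposes an imprecision in the statement itself; to match the paper you should either read ``index-resilient'' in the conclusion as referring only to the child-level condition, or add the quasi-reduced ancestry assumption explicitly.
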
 
\begin{proof}
We must show that the deletion of  $C$ does not change the node range of the parents of the nodes in $C$, as these are the only nodes in  the OBDD that could be affected by the deletion of $C$. More precisely, the deletion of $C$ could change the upper bound in the range of the parents. Observe that the cost $C(M)$ of the child $M$ of the chain $C$ is not increased since $C(M)$ depends on the children of $M$.

Condition 1 in Definition~\ref{remC} guarantees that the head of the chain $N_1$ has siblings that can be used to maintain the upper bound in the node range of  all its parents. Indeed, for any parent $P$ of $N_1$, the sibling $N'$ of $N_1$ lies on the level immediately below $P$ and will never be deleted from the OBDD as it is either non redundant, or redundant but not removable (since in this last case, its redundant sibling $N_1$ is  not the  1-child of $P$). 
Analogously,  condition 2 in Definition~\ref{remC} implies that each node $N_i$, $2 \le i \le k$, can only have  non redundant siblings $N'$, or redundant siblings $N' \neq N_i$ that are  the 1-child of their parents. These siblings, that  will never be deleted from  the OBDD, guarantee that the index reconstruction cost of each parent of $N_i$ remains equal to 1.
\hfill\end{proof}

The new reduction algorithm (Algorithm~\ref{fig:algoriduzioneNEW}) is based on three visits of the 0-index-resilient OBDD $B$ in input. The first visit is used to compute the parameter $numP(N)$, for each redundant node $N$ in $B$. 
Then, with a breadth first visit, all removable maximal chains are identified and their nodes are finally removed with a last visit of the OBDD, executed by the procedure Remove(). The procedure Remove() is a simple recursive depth first visit that deletes from the OBDD all nodes identified as removable.

\begin{figure}[t]
\begin{center}
\subfigure[\label{fig:quasi-red}]{\includegraphics[width=0.22\textwidth ]{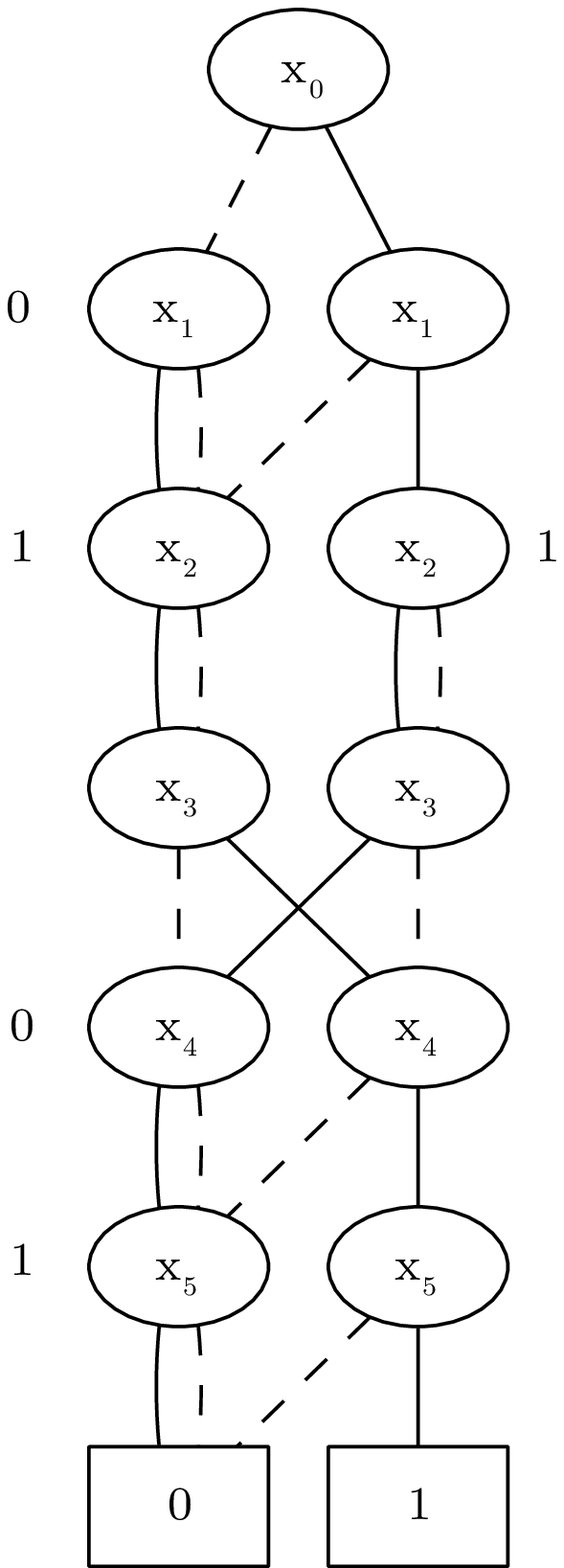}}
\hspace{20mm}
\subfigure[\label{fig:index-res}]{\includegraphics[width=0.15\textwidth ]{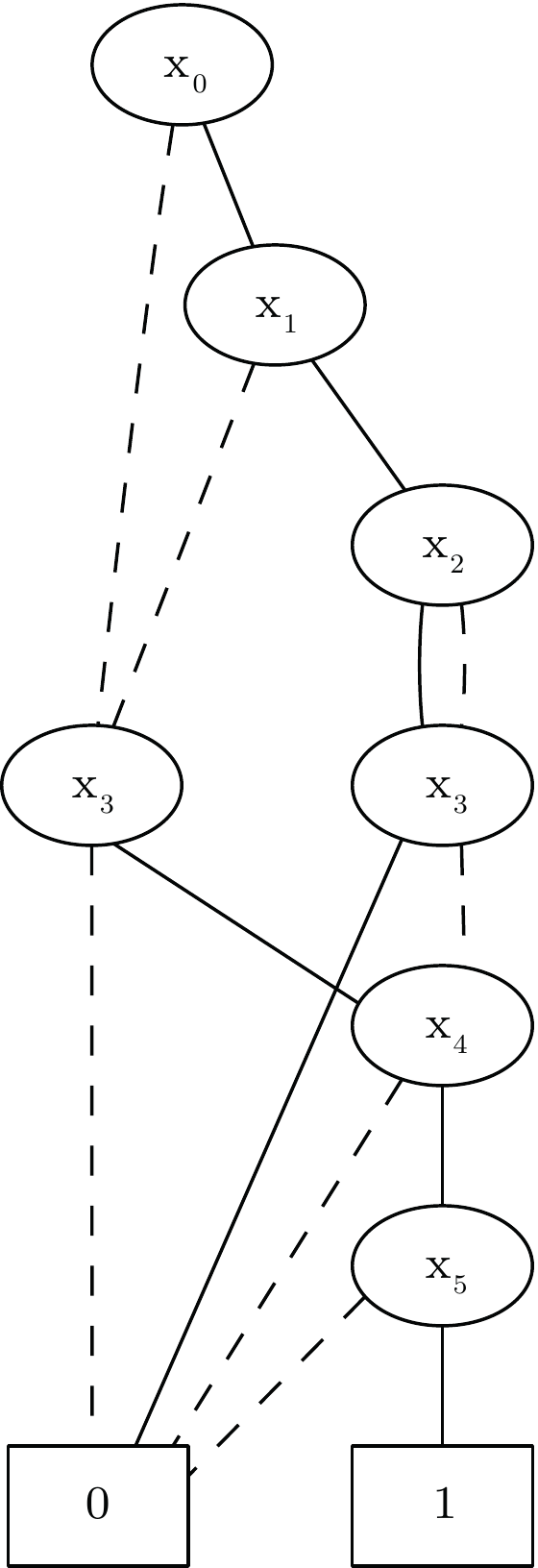}}
\end{center}
\caption{Example of the reduction algorithm. On the left a quasi-reduced OBDD and, on the right, the corresponding index-resilient reduced OBDD.}
\label{fig:exampleAlgo}
\end{figure}

\begin{algorithm}[Reduction algorithm for  index-resilient OBDD]
\label{fig:algoriduzioneNEW}

\ \ 

\begin{scriptsize}
\vspace{0.15cm} 
\hrule 

\vspace{-0.3cm}

\begin{tabbing}
{\bf INPUT}\\
\var{B} \hspace{1.5cm}\= /* index-resilient OBDD to be reduced */\\
{\bf OUTPUT}\\
\var{IRR-B}\uno /* index-resilient reduced OBDD */\\ \\
{\bf MAIN} \\
\, \com{for} \= \com{each} node $N \in B$ \com{do} \\
\uno $ToRemove(N) = False$; \\
\uno \com{if} \=  $(N.0$-$child == N.1$-$child)$ /* $N$ is a redundant node */\\
\due $numP(N)=0$; \\
\, \com{for} \= \com{each} node $N \in B $  \com{do}  /* computation of $numP$ (Definition~\ref{numP})*/\\
\uno \com{if} \=  $(((N.0$-$child).0$-$child == (N.0$-$child).1$-$child) \&\& ((N.1$-$child).0$-$child == (N.1$-$child).1$-$child))$\\
\due       $numP(N.1$-$child)$++;  \ \ \ \ /*$N.1$-$child$ and $N.0$-$child$ are redundant (Definition~\ref{numP}.1) */ \\
\uno \com{if} \= $((N.0$-$child).index$ $ != N.index +1)$ \\
\due       $numP(N.1$-$child)$++;  \ \ \ \  /*$N.0$-$child$ is on a level $> N.index +1$ (Definition~\ref{numP}.2) */ \\ 
\uno \com{if} \= $((N.1$-$child).index$ $ != N.index +1)$ \\
\due       $numP(N.0$-$child)$++;  \ \ \ \  /*$N.1$-$child$ is on a level $> N.index +1$ (Definition~\ref{numP}.2) */ \\ 
\, $nl=$ nLevels(B);  /* $nl$ is the number of levels in the $B$ */\\
\, \com{for} \= $(i=0; i< nl-1; i$++$)$ /* breadth first visit for the deletion of removable nodes */ \\
\uno \com{for} \= \com{each} node $N$ at level $L_i$  \com{do}  \\
\due \com{if} \= $((N.0$-$child == N.1$-$child)\&\&(numP(N) == 0)$ /* if $N$ is a head of a  removable chain */ \\
\3 RemovableChain($N$);  /* find any node $N$ in a rem. chain and set $ToRemove(N)$ to $True$*/\\
\, IRR-B= Remove($B$); /* remove from $B$ any node $N$ such that $ToRemove(N)$ is $True$*/\\ 
\, \com{return} IRR-B;
\end{tabbing}
\begin{tabbing}
{\bf RemovableChain($N$)} \\
\, $end =$ \com{False}; \\
\, \com{wh}\=\com{ile}   $(!end)$  /* the chain is not finished */ \\
\uno $ToRemove(N) = True$;\\
\uno $N = N.0$-$child$;\\
\uno \com{if} \= $(isLeaf(N) || (N.0$-$child \neq N.1$-$child) || (numP(N) > 1))$ /* the removable chain is finished*/\\
\due $end =$ \com{True}; \\
\end{tabbing}
\vspace{-20pt} \hrule
\end{scriptsize}
\end{algorithm}

Recall that, when we compute the parameter $numP$ starting with a quasi-reduced OBDD, we only have to consider the first property in Definition~\ref{numP} (i.e., the first \com{if} in the second \com{for each} of  Algorithm~\ref{fig:algoriduzioneNEW}). 

The correctness of the new reduction algorithm is proved in the  following theorem.  
\begin{theorem}\label{rid}
Let $B$ be an index-resilient OBDD. Algorithm~\ref{fig:algoriduzioneNEW}, with input $B$, computes a index-resilient OBDD $B_r$ equivalent to $B$ that does not contain any removable chain.
\end{theorem}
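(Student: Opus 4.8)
The plan is to verify separately the three assertions contained in the statement: that $B_r$ is index-resilient, that $B_r$ is equivalent to $B$, and that $B_r$ contains no removable chain. Everything reduces to identifying exactly which nodes the procedure \textsc{Remove} deletes, so I would first pin down the behaviour of \textsc{RemovableChain}. Tracing its loop, started on a head $N_1$ (a redundant node with $numP(N_1)=0$), it marks the current node and descends to its $0$-child, halting as soon as that child is a leaf, is non-redundant, or has $numP>1$. The key observation I would exploit is that whenever a node $N_{i+1}$ is the redundant $0$-child (equivalently $1$-child) of a redundant node $N_i$, the parent $N_i$ already satisfies property~1 of Definition~\ref{numP} for $N_{i+1}$, so $numP(N_{i+1})\ge 1$ automatically. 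Hence the only operative stopping test is $numP>1$, and this matches precisely condition~2 of Definition~\ref{remC} together with the maximality clause of Definition~\ref{MremC}. Consequently \textsc{RemovableChain}$(N_1)$ marks exactly the nodes of the maximal removable chain headed by $N_1$.

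Next I would argue that the set of all marked nodes is a disjoint union of maximal removable chains. The outer breadth-first loop inspects every node and launches \textsc{RemovableChain} on every head, so every maximal removable chain of $B$ is produced; moreover, since a head has $numP=0$ while every proper chain member has $numP=1$, no head can be an interior node of another chain. Disjointness of the chains is exactly the statement, already proved above, that no node of an index-resilient OBDD belongs to two different maximal chains (its proof rules out two redundant parents, hence rules out two descending $0$-paths meeting). I would also record that the child $M$ of a chain cannot be the head of another one, since $M$ is itself the redundant $0$-child of the redundant node $N_k$ and therefore has $numP\ge 1$.

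For equivalence I would observe that each $N_i$ in a chain is redundant with unique child $N_{i+1}$ (and $N_k$ with child $M$), so the chain is a functional pass-through to $M$; deleting it and redirecting the incoming edges to $M$ leaves the represented function unchanged, and this holds chain by chain. For index-resilience I would invoke Proposition~\ref{delRed}, which already certifies that deleting a single removable chain preserves the property, and apply it to the chains one at a time. The main obstacle, and the step I would treat most carefully, is non-interference: because $numP$ is frozen at its initial values, I must check that removing one chain neither invalidates the removable-chain status of another nor silently breaks index-resilience. Here I would use that distinct chains have distinct children (two chains sharing the same $M$ would give $M$ two redundant parents), and that a deletion only redirects the edges of parents of chain nodes onto the corresponding $M$; combined with the conditions $numP(N_1)=0$ and $numP(N_i)=1$, which guarantee that every affected parent retains a child on the level immediately below it (precisely the content of the proof of Proposition~\ref{delRed}), each successive deletion again acts on a genuine removable chain and preserves index-resilience.

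Finally, to show that $B_r$ contains no removable chain, I would combine maximality with completeness of the search: every head of $B$ was detected and its maximal chain removed in full, so no surviving node can head a removable chain that was already present in $B$. The remaining and subtler point is to rule out that the edge redirections create a \emph{new} head or a new redundant node that spawns a fresh removable chain; I would settle this by tracking how a surviving node's children and redundancy can change, showing that the only edges moved land on chain-children $M$ whose $numP$ and local structure were already accounted for, so that no new maximal removable chain can arise.
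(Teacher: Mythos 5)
Your proof is correct and follows essentially the same route as the paper's: equivalence because only redundant nodes are deleted, maximality of the removed chains via the $numP$ tests, index-resilience chain by chain via Proposition~\ref{delRed}, and the observation that redirecting a parent's edge onto the chain-child $M$ leaves that parent with two children on different levels, so no new redundant or removable nodes can appear. You make explicit the correspondence between the code and Definitions~\ref{remC}--\ref{MremC} and the non-interference of distinct chains under the frozen $numP$ values, details the paper leaves implicit, but the underlying argument is the same.
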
 
\begin{proof}
First, observe that the new reduction algorithm modifies the input OBDD $B$ only applying the deletion rule to a subset of its redundant nodes. Thus, the resulting OBDD $B_r$  is equivalent to $B$.

Second, we can observe that the algorithm removes only maximal chains. In fact, the algorithm finishes the construction of a removable chain  with a redundant child $M$ (that is not removed) only if $numP(M) > 1$. This means that there is another parent of $M$, not in the chain, that has two different redundant children and $M$ is its 1-child. Thus, the deletion of the chain cannot make $M$  removable.

To complete the proof, we must show that the deletion of a  maximal chain $C=N_1, N_2, \ldots, N_k$ cannot make redundant, and therefore possibly removable,  the parents (not in $C$) of any  node $N_i$, for any $1 \le i \le k$. 
For this purpose, we can observe that when we remove a chain of redundant nodes, the parents of these nodes, that are not included in the chain, become parents of the child $M$ of the chain.
By definition of removable chain, each parent $P$ of $N_i$, outside the chain, has another child $N'\neq N_i$ on level $\ell+1$, where $\ell$ is the level of $P$. Once the chain has been deleted, each parent  $P$ of any node  $N_i$ ends up with two children on different levels: $N'$  on level $\ell + 1$ and $M$ on a level strictly greater than $\ell +1$, since $M$ is a descendant of a child of $P$;  thus none of these parents can become redundant. 
\hfill\end{proof}

The cost of the algorithm is linear in the size of the OBDD in input, as it basically consists in just three visits of the data structure and 
each chain is visited only once, starting from its head. The $Remove(BDD)$ procedure removes a node $N$ only if $ToRemove(N)$ is $True$. The cost of the reduce procedure is linear in the number $m$ of nodes  in the OBDD $B$ (note that, since any internal node has two children, the number of edges in a OBDD is $O(m)$).

\begin{example}
Consider the quasi-reduced OBDD in Figure~\ref{fig:quasi-red}.  Algorithm~\ref{fig:algoriduzioneNEW}, starting from a quasi-reduced OBDD, first computes $numP(N)$ for each redundant node $N$. The second visit of the OBDD is a breadth first visit that considers the redundant nodes in each level and checks whether they are  heads of removable chains, starting from the root (level $L_0$). Starting from the head of any removable chain, the algorithm decides if each node $N$ in the chain can be removed, checking its parameter $numP(N)$.
In the example, the first chain considered is the one that starts with the 0-child $N_1$ of the root, which has $numP(N_1) = 0$ (condition 1 in Definition~\ref{remC}). Its unique child $N_2$ (corresponding to the variable $x_2$) is also redundant, and it is such that $numP(N_2) =  1$ (condition 2 in Definition~\ref{remC}),  moreover its child $N_3$ (corresponding to the variable $x_3$) is not redundant. Note that $x_2$ has a redundant sibling but it is the 0-child of their common parent.  The node $N_3$ is the child of the maximal chain. 
Then, the algorithm considers the next redundant node  not yet visited, that is the node with label $x_2$ on the right of   level $L_2$. This node $S$ cannot be head of a chain since $numP(S) \ne 0$; indeed, $S$ is the 1-child of a node with two redundant children.
The algorithm, finally, takes into account the redundant node $Q$, corresponding to the variable $x_4$ on level $L_4$. $Q$ can be the head of a removable chain, as $numP(Q) = 0$. Its unique child $R$ is also redundant and such that  $numP(R) = 1$. 
Thus, $Q$ and $R$ form a maximal removable chain.
The resulting OBDD, shown in Figure~\ref{fig:index-res}, is 0-index-resilient and does not contain any removable chain.
\end{example}

We now formally introduce the concept of {\em Index-Resilient Reduced OBDD}.

\begin{definition}[Index-Resilient Reduced OBDD]
An index-resilient \\ OBDD is {\em reduced} if it does not contain any removable chain.
\end{definition}

In the next theorem we summarize and prove some important properties of the index-resilient reduced OBDDs obtained with the proposed reduction algorithm applied on a quasi-reduced OBDD. 
 
\begin{theorem} 
\label{canon}
Let $B$ be an quasi-reduced OBDD and let $B_r$ be the index-resilient reduced OBDD obtained with the new reduction algorithm with input $B$ (Algorithm~\ref{fig:algoriduzioneNEW}).
Then
\begin{enumerate}
\item for each node $N$ in $B_r$, $C(N) = 1$;
\item $B_r$ does not contain mergeable nodes;
\item $B_r$ is canonical, i.e.,  given a function $f$ and a variable ordering $<$, $B_r$ is the only index-resilient reduced OBDD with variable ordering $<$ that represents $f$.
\end{enumerate}
\end{theorem}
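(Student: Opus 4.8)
The plan is to dispatch the first two claims immediately and then concentrate the real work on canonicity.

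For claim~1 I would invoke Theorem~\ref{rid}: the algorithm returns an \emph{index-resilient} OBDD $B_r$. By the observation following Definition~\ref{IROBDD}, in any index-resilient OBDD the index of a node $N$ is recovered as $\min\{i_0,i_1\}-1$ from the levels of its two children, so $C(N)=1$ for every $N$; this is exactly claim~1. Claim~2 is equally immediate, since ``no mergeable nodes'' is part of the definition of an index-resilient OBDD (Definition~\ref{IROBDD}) and Theorem~\ref{rid} guarantees that $B_r$ is of this type. Alternatively one can check directly that the deletion rule, applied only to redundant nodes as in Algorithm~\ref{fig:algoriduzioneNEW}, never creates two distinct nodes with equal index and equal children, using that the starting quasi-reduced OBDD has none.

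For claim~3 I would first use the fact, recalled in the text, that the quasi-reduced OBDD $Q$ for $f$ under the ordering $<$ is itself canonical. The strategy is then to show that \emph{every} index-resilient reduced OBDD $B'$ representing $f$ coincides with $B_r$, by proving that the set of surviving nodes in $B'$ is forced by $f$ and $<$ alone. I would split the nodes into non-redundant and redundant ones. The non-redundant nodes carry no freedom: a non-redundant node at level $i$ represents a subfunction that genuinely depends on $x_i$, so these nodes are in bijection with the nodes of the ordinary ROBDD of $f$ and must appear, at the prescribed levels, in any index-resilient representation. The content of the theorem therefore lies entirely in pinning down which \emph{redundant} nodes remain.

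For the redundant nodes I would argue that $B'$ can differ from $Q$ only by deletion of redundant nodes, and that index-resilience together with reducedness forces the deleted set to be exactly the union of the maximal removable chains of $Q$. Concretely: re-inserting along every ``long'' edge (an edge from level $i$ to a child at level $>i+1$) the missing redundant nodes, and merging duplicates, turns $B'$ back into a quasi-reduced OBDD for $f$, which by canonicity of quasi-reduced OBDDs must be $Q$; hence $B'$ is $Q$ with a set $R$ of redundant nodes removed. Index-resilience of $B'$ restricts $R$ to redundant nodes whose removal leaves every parent with a child one level below, i.e.\ to nodes organized into removable chains (Definitions~\ref{numP}--\ref{remC}); reducedness of $B'$ forces $R$ to be \emph{maximal}; and the earlier proposition that no node belongs to two distinct maximal chains guarantees that these chains do not interfere. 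The single remaining degree of freedom---when a node $P$ has two redundant children, exactly one may survive---is resolved by the convention, built into $numP$ and Algorithm~\ref{fig:algoriduzioneNEW}, of always deleting the $0$-child. Hence $R$ is uniquely determined and $B'=B_r$.

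The main obstacle is precisely this last step: showing that ``no removable chain remains'' determines a \emph{unique} irreducible configuration rather than merely \emph{some} locally minimal one. I expect the crux to be a confluence argument establishing that different legal orders of chain deletion yield the same result, for which the two key ingredients are the disjointness-of-maximal-chains proposition and the deterministic $0$-child tie-break; together they remove all ambiguity and let the uniqueness of $Q$ (via Proposition~\ref{delRed}, which certifies that each such deletion preserves index-resilience) propagate to uniqueness of $B_r$.
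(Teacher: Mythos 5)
Your proposal follows the same overall route as the paper, but one step deserves a concrete warning. For claim~2 you cannot safely lean on your primary argument, namely citing Theorem~\ref{rid} and reading ``no mergeable nodes'' off Definition~\ref{IROBDD}: the proof of Theorem~\ref{rid} only shows that parents of deleted chains do not become redundant and never addresses mergeability, and the paper's own Figure~\ref{piccoloProblema} exhibits an index-resilient (but not quasi-reduced) input on which Algorithm~\ref{fig:algoriduzioneNEW} outputs two mergeable nodes --- so the literal statement of Theorem~\ref{rid} cannot be doing this work. Your ``alternative'' direct check is what the paper actually does, and it genuinely needs the quasi-reducedness of the input: if $N$ and $N'$ at level $i$ were mergeable in $B_r$ with a common child $N_j$ at level $j$, then either $j=i+1$ and they were already mergeable in the quasi-reduced input $Q$, or $j>i+1$ and $Q$ contained two removable chains from $N$ and from $N'$ into $N_j$, forcing $N_j$ to have two redundant parents on the same level of $Q$; those parents are themselves mergeable, contradicting quasi-reducedness. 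You should promote that case analysis from an aside to the actual proof of claim~2.

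For claims~1 and~3 you are aligned with the paper, and on claim~3 your argument is in fact more explicit than the paper's. The paper's proof of canonicity is a one-liner: maximal removable chains are deleted in a unique way because no node lies in two maximal chains and the $0$-child tie-break removes the only ambiguity. Your re-inflation step --- padding every long edge of an arbitrary index-resilient reduced $B'$ with redundant chains, merging duplicates, and invoking canonicity of quasi-reduced OBDDs to conclude $B'$ is $Q$ minus a set $R$ of redundant nodes that index-resilience, reducedness, disjointness of maximal chains, and the $0$-child convention then pin down uniquely --- makes precise why \emph{every} index-resilient reduced OBDD for $f$, not just the algorithm's output, coincides with $B_r$. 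That is the step the paper leaves implicit, and it is worth writing out.
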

\begin{proof}
\begin{enumerate}
\item  Since the algorithm starts with a quasi-reduced OBDD, Proposition~\ref{delRed} guarantees that the deletion of  the removable chains maintains the OBDD index-resilient, i.e., for each remaining internal node on level $i$ there exists at least a child on level $i+1$, for any level of the OBDD.
\item Suppose by contradiction that  $N$ and $N'$ are two mergeable nodes at level $i$ of $B_r$. $N$ and $N'$  have a child at level $i+1$, and a child $N_j$ at level $j >i$. If $j = i+1$, the fact that $N$ and $N'$ are mergeable is in contradiction with the fact that the algorithm started with a quasi-reduced OBDD. If $j >i+1$, then  the original quasi-reduced OBDD contained a removable chain between $N$ and $N_j$ and one between $N'$ and $N_j$. This in turns implies that $N_j$ had two redundant parents on the same level in the original quasi-reduced OBDD. Thus we reach a contradiction with the fact that the starting OBDD is quasi-reduced, as the  two redundant parents of $N_j$ are mergeable.
\item Given a quasi-reduced OBDD there is an unique way to delete maximal reducible chains, since each node cannot be part of two different chains and the removal of reducible chains does not produce new removable chains or mergeable nodes.  
Here it is important to recall that there is no ambiguity in the definition of removable chains: given any node $P$ with two redundant and different children, only the 0-child  can be head  or part  of a removable chain. 
The  index-resilient reduced OBDD is then a canonical form.
\end{enumerate}\hfill\end{proof}

In summary, index-resilient reduced OBDDs represent a good trade-off between index reconstruction cost and number of nodes in the OBDD. 
Moreover, the index reconstruction cost remains limited even in presence of more than one error on the indices, as stated and proved in the following theorem.
\begin{theorem}\label{Th:recostr} 
The reconstruction cost of a node $N$ on level $i$ in a index-resilient reduced OBDD $B$ affected by $r$ errors on the indices is $O(\min (r, 2^{n-i}))$. 
\end{theorem}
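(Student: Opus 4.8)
The plan is to exploit the defining property of index-resilient OBDDs --- every internal node on level $i$ has at least one child on level $i+1$, so that $i = \min\{i_0,i_1\}-1$, where $i_0,i_1$ are the levels of its two children (Definition~\ref{IROBDD}) --- and turn it into a recursive, purely downward reconstruction procedure. To recover the index of a corrupted node $N$, I would read the levels of its $0$- and $1$-child: any child whose own index is intact contributes its level directly in $O(1)$, while any child whose index is also corrupted is reconstructed first by a recursive call; once both levels are available I set $N.\mathrm{index} = \min\{i_0,i_1\}-1$. The base case is reached at the two terminal nodes, which are assumed uncorrupted (stored in safe memory), and at any node with an intact index. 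Crucially, this procedure never consults the parents of a node, so the whole computation stays confined to the subgraph rooted at $N$.

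Correctness follows by induction on the level, from $n$ up to $i$. Assuming every reconstructed child returns its true level, the index-resilient property guarantees $\min\{i_0,i_1\} = i+1$, so the returned value $\min\{i_0,i_1\}-1$ is exactly the true index of $N$; the base cases (leaves at level $n$, and nodes with uncorrupted indices) are correct by assumption.

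For the cost I would first memoize reconstructed indices, so that each corrupted node is rebuilt exactly once, and each such rebuild is $O(1)$ once the levels of its two children are known. The total cost is then proportional to the number of corrupted nodes that are actually reconstructed, and it remains to bound this number by $O(\min(r, 2^{n-i}))$. On one hand, only corrupted nodes are ever reconstructed, and there are at most $r$ of them, giving $O(r)$. On the other hand, every node touched by the recursion is a descendant of $N$; unfolding the subgraph reachable from $N$ into a binary decision tree of depth $n-i$ shows it contains at most $\sum_{k=0}^{n-i} 2^k = O(2^{n-i})$ distinct nodes, since a reduced DAG never has more distinct nodes than its tree unfolding. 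Combining the two caps yields $O(\min(r, 2^{n-i}))$.

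The main obstacle I anticipate is the cost analysis rather than the correctness argument: one must verify carefully that the recursion never escapes the subgraph below $N$ (otherwise the $2^{n-i}$ ceiling fails) and that, because at least one child always lies exactly one level below, a single downward pass suffices with no node revisited after memoization. A secondary delicate point is justifying the $O(2^{n-i})$ bound on the number of descendants through the tree-unfolding argument, and checking that reconstructing both children in order to determine their minimum level does not inflate the count beyond the number of corrupted descendants of $N$.
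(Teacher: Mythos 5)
Your proposal is correct and follows essentially the same route as the paper's own proof: a purely downward recursion that restores a corrupted index as $\min\{i_0,i_1\}-1$ once both children's levels are known, stopping at uncorrupted nodes and at the (assumed safe) terminals, and bounding the work by both the number $r$ of corrupted nodes and the $O(2^{n-i})$ size of the sub-OBDD rooted at $N$. Your write-up is in fact somewhat more explicit than the paper's on the points it glosses over (memoization so no node is revisited, and the tree-unfolding justification of the $O(2^{n-i})$ bound), but there is no substantive difference in approach.
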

\begin{proof}
First, we note that, starting from the node $N$ on level $i$, there is always a complete path (i.e., a path containing the variable $x_i, x_{i+1},\ldots,x_n$) that ends on a leaf. This path can be exploited to reconstruct the index of $N$.
In fact, the index of  $N$ can be computed using the indices of its children in the following way. Let $j_0$ and $j_1$ be the levels of the 0-child and of the 1-child of $N$. If both children are not affected by errors, then the index of $N$ is $i = \min \{j_0, j_1\} - 1$. 
Otherwise, we recursively proceed on the OBDD rooted in any corrupted child of $N$, and we will restore the index of $N$ when both the indices of its children will be corrected. The recursion stops on corrupted nodes with two  uncorrupted children.
Note that we can consider the two terminal nodes (the leaves of the OBDD) uncorrupted, as they could be memorized in a safe memory, or duplicated. In the worst case, the reconstruction cost is the minimum between the dimension of the OBDD rooted in $N$ (i.e., $O(2^{n-i})$) and the total number of corrupted nodes in $B$ (i.e., $O(r)$).
The number of visited nodes is then $O(\min (r, 2^{n-i}))$.
\hfill\end{proof}

We can observe that this strategy does not exploit the unique table of the OBDD.

Finally observe that, even if in our analysis we have implicitly assumed  that an OBDD is constructed correctly, and that  memory faults occur when the data structure is in use, this assumption can be completely removed for index-resilient reduced OBDDs. Indeed, their construction starts from a binary decision tree that is transformed into a QR-BDD applying the merge rule, and in both models each node has all children on the level immediately below. Moreover, during the execution of the reduction algorithm on a quasi-reduced OBDD, we always guarantee that each node has at least one child on the level below, thus a faulty index can be immediately detected and restored. 

\section{Operations on Index-Resilient OBDDs}
\label{oper}

In the previous section we have described a new OBDD data structure that is resilient to index faults and we have shown how  an index-resilient (reduced) OBDD can be constructed starting from a quasi-reduced one.

However, OBDDs are not often constructed from quasi-reduced ones, but instead through a sequence of binary Boolean operators (as AND, OR, EXOR) applied to other OBDDs mainly using the algorithm  {\sc Apply}  reviewed in the Appendix. Therefore,  we now discuss how this   algorithm can be 
modified in order to guarantee that the OBDD in output is resilient to index faults. Moreover, the new described algorithm  for {\sc Apply} is error resilient itself, i.e., it computes an OBDD resilient to index faults even if some errors occur during the computation.  In other words, in this section we consider a dynamic framework where OBDDs are dynamically computed in a non-safe memory.

When performing operations on OBDDs, we will always suppose that the OBDDs in input are index-resilient reduced OBDDs, and we will show that the OBDD in output is still an index-resilient reduced OBDD.

\subsection{Error Correction Procedures}
\label{errorcorrection}
The algorithm discussed in this section derives from the standard algorithm {\sc Apply} avoiding the use the unique table. Moreover, as described in the Appendix, the polynomial complexity of the {\sc Apply} algorithm is due to a matrix $M_A$ that contains the pointers to the nodes computed in the recursive calls.
Therefore, we first discuss how to handle errors occurring in the data structures used by OBDD operators, i.e., the indices in the OBDD and the matrix $M_A$.

\paragraph{Errors in Indices}
We first recall that, in an  index-resilient OBDD, each internal node $N$ on level $i$  has at least one child on level $i+1$. The  algorithm for the reconstruction of a faulty index is described in the proof of Theorem~\ref{Th:recostr} and is shown in Algorithm~\ref{fig:indrec}. In particular, the reconstruction is based on the following property.  Let  $N$ be a node on level $i$ of an index-resilient OBDD, there is always a complete path (i.e., a path containing the variable $x_i, x_{i+1},\ldots,x_n$) from $N$ to a leaf.
The complete strategy is described in the recursive function {\bf IndexReconstruct(B,N)} (see Algorithm~\ref{fig:indrec}). 
Recall that, by Theorem~\ref{Th:recostr}, the number of recursive calls of Algorithm~\ref{fig:indrec} is in $O(r)$, where $r$ is the total number of errors in the index-resilient OBDD.

\begin{algorithm}[Index reconstruction]
\label{fig:indrec}

\ \ 

\begin{scriptsize}
\vspace{0.15cm} 
\hrule 

\vspace{-0.3cm}

\begin{tabbing}
{\bf INPUT}\\
\var{B} \hspace{1.5cm}\= /* index-resilient OBDD with $n$ variables */\\
\var{N} \hspace{1.5cm}\= /* node in $B$ with a faulty index */\\
{\bf OUTPUT}\\
\var{index}\uno /* the correct index of $N$ */ \\
{\bf SIDE EFFECTS} \\
$N$ has been corrected in  $B$ (and, possibly,   some paths from $N$ to leaves have been corrected)\\ \\
{\bf IndexReconstruct(B,N)} \\
\, \com{if} \=  (($N.0$-$child$ is a leaf) $\&\&$  ($N.1$-$child$ is a leaf)) /* $N$ has index $n-1$ */\\
\uno $N.index = n-1$; \\
\, \com{else if}    ($N.0$-$child$ is a leaf) \\
\uno \com{if} \= (($N.1$-$child).index$ is not correct)\\
\due ($N.1$-$child).index$  = IndexReconstruct(B,$N.1$-$child$); \\
\uno $N.index = (N.1$-$child).index -1$; \\
\, \com{else if}    ($N.1$-$child$ is a leaf) \\
\uno \com{if} \= (($N.0$-$child).index$ is not correct)\\
\due ($N.0$-$child).index$  = IndexReconstruct(B,$N.0$-$child$); \\
\uno $N.index = (N.0$-$child).index -1$; \\
\, \com{else} /* both children are not leaves */\\
\uno \com{if}  (($N.0$-$child).index$ is  not correct))\\
\due ($N.0$-$child).index$  = IndexReconstruct(B,$N.0$-$child$); \\
\uno \com{if}  (($N.1$-$child).index$ is  not correct))\\
\due ($N.1$-$child).index$  = IndexReconstruct(B,$N.1$-$child$); \\
\uno $N.index = \min\{(N.0$-$child).index, (N.1$-$child).index\}-1$; \\
\, \com{return} $N.index$;
\end{tabbing}
\vspace{-8pt} 
\hrule
\end{scriptsize}
\end{algorithm}
\paragraph{Errors in Matrix $M_A$}
Matrix $M_A$ is exploited by the {\sc Apply} algorithm  for memorizing the output of the recursive calls in order to avoid an exponential number of re-computations (see the Appendix for more details). In particular, let  $B$ and $B'$ be the two OBDDs whose roots are the inputs to the {\sc Apply} algorithm, we have that matrix $M_A$ is a table with $t$ rows and $t'$ columns, where $t$ (resp.,  $t'$) is the number of nodes in the OBDD $B$ (resp.,  $B'$). Position $M_A[N,M]$ contains NULL if {\sc Apply}$(N,M)$ is never computed, otherwise,  $M_A[N,M]$ contains the pointer to the root node of the sub-OBDD that is the solution of {\sc Apply}$(N,M)$. 
If $M_A[N,M]$ is corrupted the algorithm simply computes {\sc Apply}$(N,M)$. This means that, in the worst case, the number of re-computations of the same pointers in $M_A$ is $O(r_{M_A})$ where $r_{M_A}$ is the number of errors in $M_A$.

\subsection{The {\sc Apply} Algorithm}
\label{apply}

Let us now discuss how to execute the {\sc Apply} procedure  on  two index-resilient reduced OBDDs, representing two functions $f$ and $g$, and compute a new  OBDD $B$, representing $(f \mbox{ {\bf op} } g)$ for a given Boolean binary operator {\bf op}. The obtained OBDD $B$ will then be transformed in an index-resilient reduced OBDD through a reduction procedure described below.

Unlike the standard implementation, this {\sc Apply} procedure makes use of the matrix $M_A$  without exploiting  the unique table. Therefore, the OBDD in output can contain mergeable nodes. Observe that the matrix $M_A$ is enough to guarantee that the  {\sc Apply} procedure is quadratic, and that the size of the output OBDD is bounded by the product of the size of the two input OBDDs.
The second difference with respect to the standard implementation is that during the execution of the algorithm we do not delete redundant nodes in order to preserve the index resilience property. As already mentioned, the detection and deletion of the removable chains is executed in a second step.

We can now observe that the {\sc Apply} algorithm, executed on two index-resilient OBDDs,  preserves, by construction, the property that each internal node on level $i$, of the output OBDD,  has at least one child on level $i+1$.
Indeed, recall that if {\sc Apply} is called on two nodes $N$ and $M$ with the same index $i$, then a new node $U$ with index $i$ is created, and the algorithm is recursively executed on the two 0-children and on the two 1-children of $N$ and $M$ to generate the OBDDs whose roots become  the $0$-child and $1$-child of $U$,  respectively.
If instead $N$ and $M$ have different indices, {\sc Apply} creates a new node $U$ with the {\em lowest} index between those of $N$ and $M$, and proceeds recursively by pairing the $0$ and $1$-child of the node with lowest index with the  other node  to generate the OBDDs whose roots become the $0$-child and $1$-child of $U$,  respectively.

Without loss of generality, suppose that $N$.index $=i$ and $M$.index $\ge i$.
Then the new node $U$ has index $i$ and at least one of its children has index $i+1$, since 
\begin{enumerate}
\item {\sc Apply} recurs on the two children of $N$, one of which  certainly has index $i+1$, paired with $M$ (if $M$.index $> i$) or  with the corresponding children of $M$ (if $M$.index $> i$), whose indices are in both cases greater or equal to $i+1$;
\item {\sc Apply} always chooses as index of the new node the lowest index between  those of the two nodes in input.
\end{enumerate}

During the execution of the {\sc Apply} procedure, some errors in the data structures may occur, in particular errors in the indices of the nodes in the OBDDs in input and errors in the recursive table  $M_A$.
All these errors can be handled as explained in the previous Section~\ref{errorcorrection}.

\begin{figure}[t]
\begin{center}
\subfigure[Algorithm~\ref{fig:algoriduzioneNEW}: input. \label{fig:piccolo}]{\includegraphics[width=0.405\textwidth ]{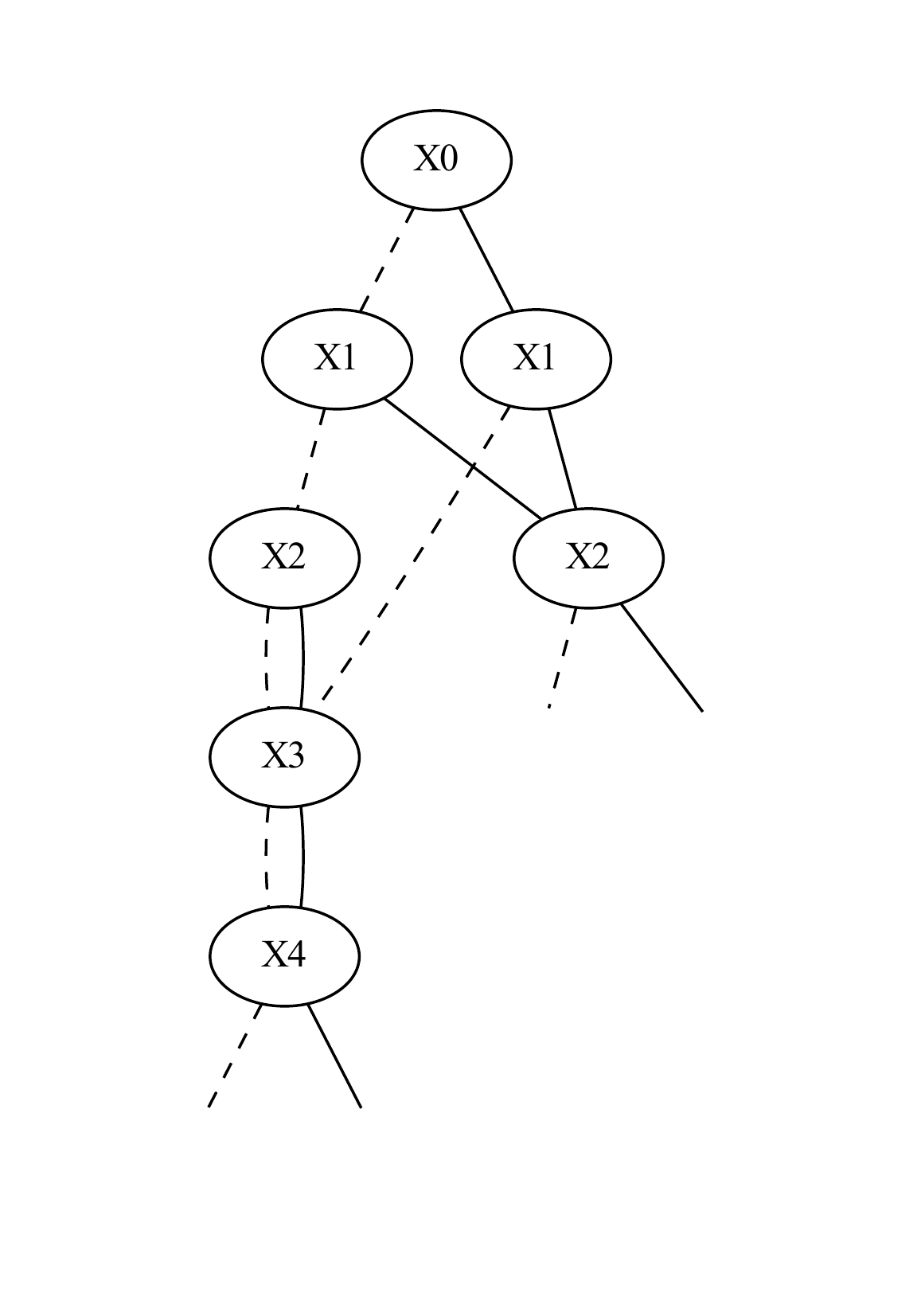}}
\subfigure[Algorithm~\ref{fig:algoriduzioneNEW}: output.\label{fig:problema}]{\includegraphics[width=0.40\textwidth ]{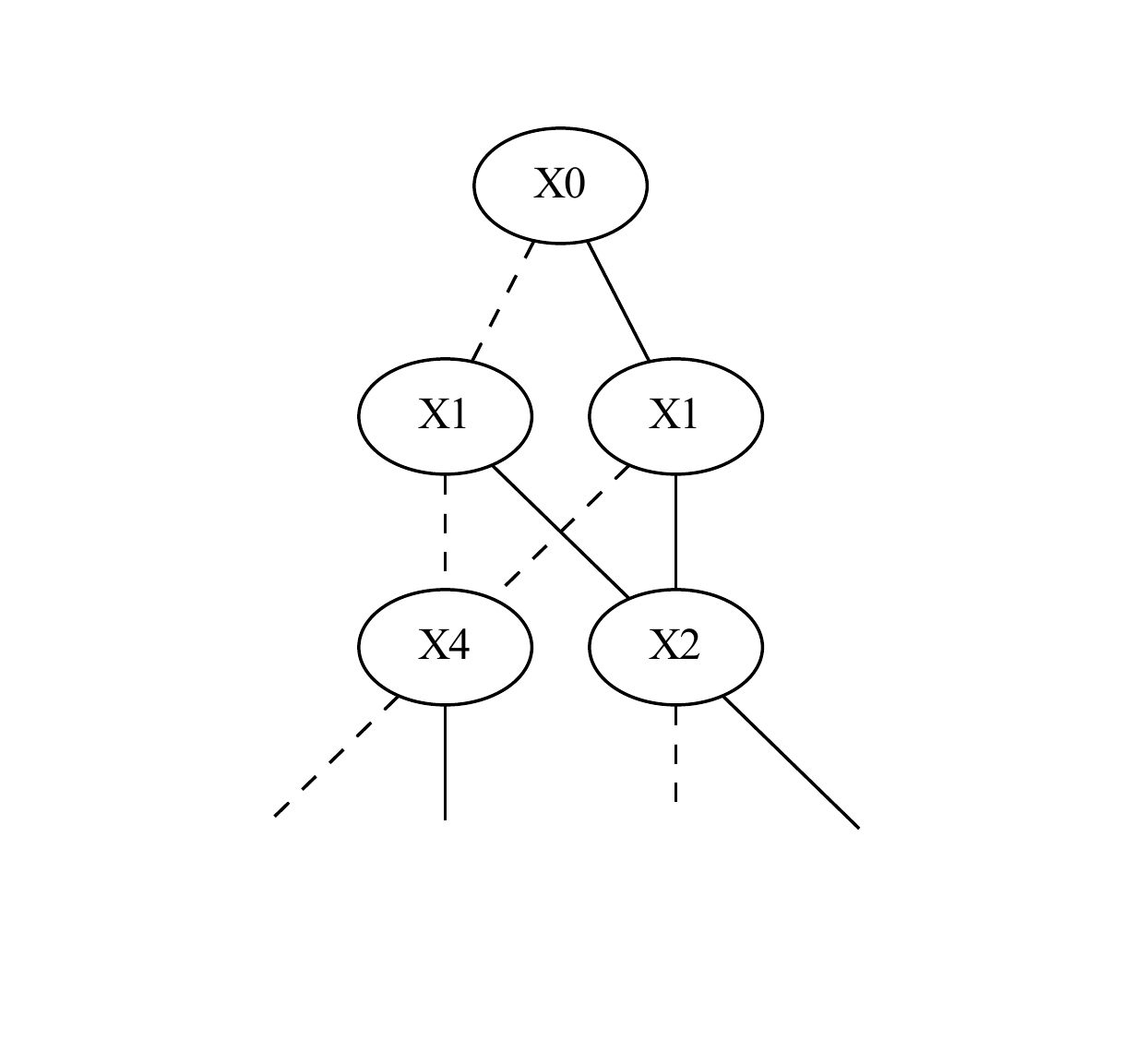}}
\caption{Example of execution of Algorithm~\ref{fig:algoriduzioneNEW} onto an index-resilient OBDD (a) that is not a quasi-reduced one: the resulting OBDD (b) contains two mergeable nodes.}
\label{piccoloProblema}
\end{center}
\end{figure}

The OBDD $B$ for $(f \mbox{ {\bf op} } g)$ computed by {\sc Apply} can contain mergeable nodes and chains of removable redundant nodes.
In order to get an index-resilient reduced  OBDD we must perform some operations on the data structure.
First of all, observe that we cannot simply reduce  $B$ by running the reduction algorithm  described in Section~\ref{IR}. Indeed, since $B$ is not a quasi-reduced OBDD, Algorithm~\ref{fig:algoriduzioneNEW} may introduce some new mergeable nodes in the OBDD, as shown in the example depicted in Figure~\ref{piccoloProblema}. 
Therefore, instead of running Algorithm~\ref{fig:algoriduzioneNEW} alone, we first transform 
$B$ into a quasi-reduced  equivalent OBDD, and we  perform the following {\sc Reduction} procedure:
\begin{enumerate}
\item {\em Transformation into a quasi-reduced OBDD.} For any $i$ and for any node $N$ on level $i$ with one  child $M$ on level $j > i+1$, we insert a chain of redundant nodes between $N$ and $M$, with consecutive indices of value in the range $[i+1, j-1]$. This operation is simply implemented with a visit of the OBDD $B$.
After this step, all paths from the root of $B$ to the terminal nodes have length $n$. Also observe that the size of the OBDD increases in the worst case for a multiplicative factor of order $n$.
\item {\em Merge.} Since the obtained OBDD $B$ contains mergeable nodes we need to reduce it applying the merge rule to its nodes before executing Algorithm~\ref{fig:algoriduzioneNEW}. In order to guarantee the error resilience of this procedure we avoid the use of unique tables or other data structures, at the  expense of a quadratic time complexity.
In fact, this task can be accomplished through a DFS visit of the OBDD:  when a node $N$ is visited, we perform a second visit that identifies and merge all nodes mergeable with $N$. 
\item {\em Removal of redundant chains.} Finally, we can reduce the OBDD $B$ applying Algorithm~\ref{fig:algoriduzioneNEW}. 
\end{enumerate}
As for the {\sc Apply} procedure, if during the execution of these steps some errors in the indices occur, we handle them as explained in  Section~\ref{errorcorrection}.

After all these operations, the final OBDD will be an index-resilient reduced OBDD, as shown in the following theorem.
\begin{theorem}\label{theoApply}
Let $B$ be the  OBDD computed executing the {\sc Apply} algorithm 
and the {\sc Reduction} procedure. Then $B$ is an index-resilient reduced OBDD.
\end{theorem}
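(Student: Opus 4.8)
The plan is to reduce the statement to the already-proved Theorems~\ref{rid} and~\ref{canon} by showing that the three-step {\sc Reduction} procedure, applied to the output of {\sc Apply}, feeds a genuine \emph{quasi-reduced} OBDD into Algorithm~\ref{fig:algoriduzioneNEW}. First I would recall that the {\sc Apply} output is already index-resilient: the argument preceding this theorem shows that every internal node on level $i$ receives at least one child on level $i+1$, because {\sc Apply} assigns to each new node the lowest index of the paired nodes and recurs on the child of that node lying one level below. Thus the only defects of this OBDD are mergeable nodes and removable chains, which is precisely what the {\sc Reduction} procedure is designed to remove, while functional equivalence is maintained at every step.

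Next I would analyze the three steps in order. Step~1 inserts, between any node $N$ on level $i$ and a child $M$ on level $j>i+1$, a chain of redundant nodes with the consecutive indices $i+1,\ldots,j-1$; since each inserted node has both edges pointing to the same successor, it represents the same function as $M$, so the represented function is preserved, and afterwards every edge joins level $i$ to level $i+1$, i.e.\ all root-to-leaf paths have length $n$. Step~2 applies the merge rule until no mergeable pair remains; merging identifies nodes that share the same index and the same pair of children, hence it neither changes the level of any surviving node nor the levels of its children, so the complete-levels property produced by Step~1 is retained. The key observation is that the outcome of Steps~1 and~2 is, by definition, a quasi-reduced OBDD: it has no mergeable nodes and every node on level $i$ has both children on level $i+1$.

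Having established this, Step~3 runs Algorithm~\ref{fig:algoriduzioneNEW} on a quasi-reduced OBDD, and the conclusion follows by citation. Theorem~\ref{rid} guarantees that the resulting OBDD is index-resilient and contains no removable chain, so it is reduced in the sense required by the definition of an index-resilient reduced OBDD. Theorem~\ref{canon} then guarantees that it additionally has no mergeable nodes (and is canonical, with $C(N)=1$ for every node). Together these give all the defining properties of an index-resilient reduced OBDD.

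The main obstacle I anticipate is justifying rigorously that Steps~1 and~2 really yield a \emph{quasi-reduced} OBDD, since this is the precise hypothesis required by Theorems~\ref{rid} and~\ref{canon}. In particular one must argue that the merge step eliminates \emph{every} mergeable pair---including any newly created among the redundant nodes inserted in Step~1---and that merging can never reintroduce a level gap. This is exactly the point that makes the preliminary transformation indispensable: as Figure~\ref{piccoloProblema} shows, running Algorithm~\ref{fig:algoriduzioneNEW} directly on the non-quasi-reduced output of {\sc Apply} can create fresh mergeable nodes, so skipping Steps~1 and~2 would break the argument. Finally, I would note that any index errors arising during these steps are corrected by the reconstruction routine of Section~\ref{errorcorrection}, so they do not affect the structural conclusion.
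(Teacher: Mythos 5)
Your proposal is correct and follows essentially the same route as the paper: show that Steps~1 and~2 of the {\sc Reduction} procedure turn the {\sc Apply} output into a quasi-reduced OBDD, then invoke Theorems~\ref{rid} and~\ref{canon} for Step~3. Your version merely spells out in more detail why the insertion of redundant chains and the exhaustive merging yield the quasi-reduced form, which the paper's proof asserts more tersely.
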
 
\begin{proof}
In order to prove that $B$ is index-resilient, we must show that $B$ does not contain mergeable nodes, and that  each internal node $N$ on level $i$  has at least  one child on level $i+1$. This follows immediately since steps 1 and 2, of the {\sc Reduction} procedure, transform $B$ into a quasi-reduced OBDD. Thus,  Algorithm~\ref{fig:algoriduzioneNEW} is executed on a quasi-reduced OBDD and Theorem~\ref{canon} implies that $B$ is index-resilient and reduced.
\hfill\end{proof} 

Finally, observe that if we use step 2 as the merge strategy to build a quasi-reduced OBDD starting from a binary decision tree (see Section~\ref{IR}), we have that steps 2 and 3 of the {\sc Reduction} procedure provide an error resilient algorithm for the construction of index-resilient reduced OBDDs starting from binary decision trees. 
 
\section{Error in Edges}
\label{edges}
Besides indices, it is important to study how to correct errors in pointers to the children of a node. Pointer based data structures are not, in general, error resilient, since the loss of a pointer can imply the loss of an important part of the data structure.
This is the scenario that we have for OBDDs. In fact, if a node $N$ has a unique parent and an error occurs to the pointer from the parent to $N$, then $N$ and, possibly, part of the subgraph rooted in $N$ are no more reachable.
Moreover, the loss of a pointer can give errors even if the pointed node has more than one parent, as discussed in the following example.

\begin{figure}[t]
\begin{center}
\includegraphics[width=0.4\textwidth ]{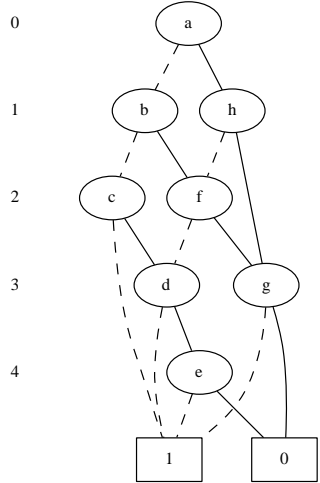}
\end{center}
\caption{\label{fig:exbddegdes}OBDD described in Example~\ref{ex:edges}.}
\end{figure}

\begin{example}
\label{ex:edges}
Consider the OBDD in Figure~\ref{fig:exbddegdes} and the node $c$. Suppose that its 1-edge, i.e., the edge pointing to $d$,  is corrupted. Note that, although we are loosing the edge from $c$ to $d$, $d$ is still connected to the rest of the OBDD through the edge from $f$ to $d$. Nevertheless, we cannot decide which is the value of the function represented by the OBDD in the case we are considering a path through $c$ and the variable contained in $c$ is true.
\end{example} 
    
Observe that in the unique table we have always a pointer to each node. The problem is to reconstruct a corrupted  link between two nodes in the OBDD.
The unique table can be useful in this reconstruction, but, unfortunately, in the unique table we do not memorize the pointer values, but only the indices. In fact, we use the hash of the pointer values to identify a node.  More precisely, consider the unique subtable $\mathcal{U}_{i}$ corresponding to the variable  $x_i$ of the unique table $\mathcal{U}$ of a given OBDD. We can find the subtable $\mathcal{U}_{i}$ of $\mathcal{U}$ knowing the index $i$. In $\mathcal{U}_{i}$ we have all the nodes that contain  $x_i$. 
Each node is memorized using an hash function on its children. Consider the node  $N$ characterized by the tuple $[i, N.\mbox{0-child}, N.\mbox{1-child}]$:  we have that $N \in \mathcal{U}_{i}$ and its index in the subtable $\mathcal{U}_{i}$ is the integer value $Hash(N.\mbox{0-child}, N.\mbox{1-child})$. Obviously, starting  from such index value we cannot directly compute the values $N.\mbox{0-child}$ and $N.\mbox{1-child}$.

A possible way to use the unique table is to try all the possible pointers of the graph. In particular, let $N$ be a node, corresponding to the tuple  $[i, N.\mbox{0-child}, N.\mbox{1-child}]$, with an error in $N$.1-child. For reconstructing $N$.1-child we can try each node $M$ (where $M$ is the pointer to the node) of the OBDD and verify all the nodes corresponding to $Hash(N.\mbox{0-child}, M)$ on the subtable $\mathcal{U}_{i}$; if one of them is the pointer to $N$ we have found that $M$ corresponds to $N$.1-child. 

This strategy is very expensive since, in the worst case, we would check each node in the graph. To restrict the number of possible checks, we can exploit again the OBDD properties. If we have a node $N$ at level $l_N$ (i.e., with index $x_{l_N}$), each of its  children must have a level $l_C > l_N$. We have then a bound on the number of possible nodes to check. Of course, if we have an error in the levels close to the OBDD root, the number of nodes to check is still high. 

A possible way to improve the pointer reconstruction could be that of storing all pointers to the nodes in an additional  vector, in the order given by a depth first visit of the OBDD, following first the 0-edges, as  suggested in~\cite{SJ13}. 
Therefore, let us assume that the nodes of the OBDD are saved contiguously in the memory, in the order given by a depth first visit of the OBDD, following first the 0-edges; alternatively we can assume that all pointers to the nodes are stored, in the same order, in an additional  vector.
Each node $N$ is represented in the vector as the triple [index, $N$.0-child, $N$.1-child].
In contrast to what was done in~\cite{SJ13}, we will consider terminal nodes as simple nodes, and we will assume that the terminal node with label $0$ has index $\underline{0}$ and pointers to NULL, while the terminal node with label $1$ has index $\underline{1}$ and pointers to NULL; e.g., the $0$ terminal node is represented as $[\underline{0} - -]$.

\begin{definition}[Node vector] 
The {\em node vector} $\mathcal{V}_B$ of an OBDD $B$ is a vector containing all nodes $N \in B$ in the order given by a depth first visit of $B$, following first the 0-edges.
\end{definition}
For instance, the node vector of the OBDD $B$ in Figure~\ref{fig:exbddegdes}, is given by  $$\mathcal{V}_B = \text{[0 b h] [1 c f] [2 \underline{1} d] [3 \underline{1} e] [\underline{1} - -] [4 \underline{1} \underline{0}] [\underline{0} - -] [2 d g] [3 \underline{1} \underline{0}] [1 f g]}.$$

In~\cite{SJ13} the main goal is to reduce the size of the OBDD, so  that all redundant information are then deleted from the vector of nodes, with the final effect that some pointers are removed from the memory. 
In fact, given a node $N$, if $N$.0-child has not been visited yet, then it will be surely stored in the memory area adjacent to that of $N$. Analogously, if $N$.1-child  has not been visited yet, then it will be stored after the subgraph rooted in  $N$.0-child. Therefore, 
the pointers $N$.0-child and $N$.1-child are not necessary anymore, as the children of $N$ can be found computing their position.

\begin{example}
Consider the first node of the OBDD $B$ in Figure~\ref{fig:exbddegdes}. As both its children have not yet been visited, we do not need  their identification to compute their positions. On the contrary, e.g., for the node $f$ we need the pointer to $d$, as $d$ has already been visited and stored before $f$ and its position cannot be computed. For the last node visited, $h$, we  need to keep both pointers.
The OBDD could then be represented in a more compact way by the vector  $$\text{012\underline{1}3\underline{1}4\underline{1}\underline{0}2d3\underline{1}\underline{0}1fg}.$$
\end{example}

Observe that in this way it is possible to quickly find the children for some nodes, but it is not possible to determine a priori which nodes will have this characteristic, i.e., for which corrupted node it will be possible to recompute the children.
Thus, the idea is to use the computed value as an upper bound, i.e., as a position in the node vector over which the child cannot be found.
\begin{definition}[Child bound]
Given a node vector  $\mathcal{V}$, a node $N$, and a child $N_f$ of $N$, the  {\em child bound}  is the integer value $L_{N_f}$ s.t. $$N_f \in \{\mathcal{V}[0], \mathcal{V}[1], \ldots, \mathcal{V}[L_{N_f}]\}.$$
\end{definition}
The value of $L_{N_f}$ depends on the position of the node $N$ and on the edge connecting $N$ to its child $N_f$: the value is indeed computed differently if the edge is a 0-edge or a 1-edge. 

\begin{proposition}
\label{limitelow}
Let  $N$ be a node located at the position $p$ of the node vector. The 0-child of node $N$, $N.\mbox{0-child}$,  has child bound $L_{N.0\mbox{-}child} = p+1$.
\end{proposition}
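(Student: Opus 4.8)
The plan is to argue directly from the way the node vector $\mathcal{V}_B$ is built: it lists the nodes of $B$ in the order in which a depth-first visit \emph{first reaches} them, always descending along the $0$-edge before the $1$-edge. Thus saying that $N$ sits at position $p$ means that $N$ is the $(p+1)$-th node discovered by this traversal. To bound the position of $N.\mbox{0-child}$, I would freeze the traversal at the instant $N$ is appended to $\mathcal{V}_B$ and ask what can occupy the very next slot, position $p+1$. Since the visit explores the $0$-edge first, the immediate next action after recording $N$ is to follow the edge to $N.\mbox{0-child}$, so everything hinges on whether that node has already been discovered.

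I would then split into two cases. If $N.\mbox{0-child}$ has not yet been visited, the traversal descends into it right away, so it becomes the next node recorded and lands exactly at position $p+1$; crucially, nothing can be inserted between $N$ and $N.\mbox{0-child}$, because following the $0$-edge is the first thing the visit does upon leaving $N$. If instead $N.\mbox{0-child}$ was already reached earlier through another path (it is a shared node of the DAG), then it was appended to $\mathcal{V}_B$ strictly before $N$, and hence occupies some position $q < p$. In either case $N.\mbox{0-child}$ is stored at an index at most $p+1$, which is precisely the assertion $N.\mbox{0-child} \in \{\mathcal{V}[0],\ldots,\mathcal{V}[p+1]\}$, i.e. $L_{N.0\mbox{-}child}=p+1$.

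The argument is short, and the only point deserving care is justifying that, in the unvisited case, no other node is emitted between $N$ and its $0$-child: this is exactly the pre-order, $0$-edge-first property of the traversal, which guarantees that a node is recorded before any of its descendants and that its $0$-subtree is opened immediately after it. This also shows the bound is tight, since the first case realizes position $p+1$ exactly; by contrast the $1$-child, explored only after the entire $0$-subtree has been emitted, will require the separate, larger bound treated in the companion statement that follows.
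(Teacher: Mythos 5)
Your proof is correct and follows essentially the same route as the paper's: the same case split on whether $N.\mbox{0-child}$ has already been visited, with the already-visited case giving a position $q<p$ and the unvisited case giving exactly $p+1$ because the $0$-edge is explored first. The extra remark on why nothing can be emitted between $N$ and its $0$-child is a welcome clarification but does not change the argument.
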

\begin{proof}
Recall that that the  nodes of the OBDD are inserted in the node vector in the order given by a depth first visit, following first the 0-edges. When the node $N$ is inserted two cases may occur:
\begin{enumerate}
\item $N.\mbox{0-child}$ has already been inserted in the vector;
\item $N.\mbox{0-child}$ has not yet been inserted in the vector.
\end{enumerate}
In the first case, $N.\mbox{0-child}$ occupies a position $p_{0} < p$. In the second case, since we are visiting the OBDD in depth first order, with priority on the 0-edges, $N.\mbox{0-child}$ will be inserted immediately after $N$, and its location will then be 
  $p+1$.
\hfill\end{proof}

\begin{example}
Consider  the OBDD $B$ in Figure~\ref{fig:exbddegdes}, and its node vector $\mathcal{V}_B$. Suppose that the 0-edge of node $b$ is corrupted. As $b$ is located at the position $p_b = 1$ of $\mathcal{V}_B$, the child bound  of its 0-child is $L_{N.0\mbox{-}child} = p_b+1 = 2$. Observe that the 0-child of $b$, $c$, occupies precisely the position 2.

Now, suppose that the 0-edge of node $f$ is corrupted. In this case we have $L_{N.0\mbox{-}child} = p_f+1 = 8$, as the position of $f$ in  $\mathcal{V}_B$ is $p_f=7$. The 0-child of $f$, $d$, was inserted in $\mathcal{V}_B$ before $f$, and thus occupies a position 
$p_d<L_{N.0\mbox{-}child}$. 
\end{example}

For the 1-edges, the child bound is computed in a different way, as these edges are visited only when the subgraph rooted in the 0-child of the current node has been visited and inserted in the node vector.
\begin{proposition}
\label{limitehigh}
Let  $N$ be a node located at the position $p$ of the node vector, let $N.\mbox{0-child}$ and $N.\mbox{1-child}$ be its children, and let  $B_{0}$ be the subgraph rooted in $N.\mbox{0-child}$. Then, the child bound of $N.\mbox{1-child}$ is $L_{N.1\mbox{-}child} = p+|B_{0}|+1$.
\end{proposition}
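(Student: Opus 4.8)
The plan is to mirror the structure of the proof of Proposition~\ref{limitelow}, reasoning about the order in which the depth-first visit inserts nodes into the node vector. The crucial difference from the $0$-edge case is that the $1$-edge of $N$ is explored only \emph{after} the whole subgraph $B_{0}$ rooted in $N.\mbox{0-child}$ has been traversed, so I first need to bound how many positions of the vector are consumed by this traversal.

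First I would fix the moment at which $N$ is inserted, at position $p$. By the definition of the visit, immediately after $N$ the algorithm descends along the $0$-edge and recursively explores $B_{0}$; only once this subexploration is complete does it follow the $1$-edge. Hence every node that is inserted into $\mathcal{V}$ strictly between position $p$ and the position eventually occupied by $N.\mbox{1-child}$ belongs to $B_{0}$. Since distinct positions hold distinct nodes and $B_{0}$ contains exactly $|B_{0}|$ nodes, at most $|B_{0}|$ new entries can be added during the traversal of $B_{0}$.

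Then I would split into two cases, as in Proposition~\ref{limitelow}. If $N.\mbox{1-child}$ has already been inserted by the time the $1$-edge is reached, its position is some $p_{1}$, which is necessarily smaller than $p+|B_{0}|+1$, so the bound holds trivially. If instead $N.\mbox{1-child}$ is new, it is inserted in the first free position after the block devoted to $B_{0}$; by the previous count this position is at most $p+|B_{0}|+1$. In either case $N.\mbox{1-child} \in \{\mathcal{V}[0], \ldots, \mathcal{V}[p+|B_{0}|+1]\}$, which is exactly the claimed child bound $L_{N.1\mbox{-}child} = p+|B_{0}|+1$.

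The main subtlety to get right is that $|B_{0}|+1$ is only an \emph{upper} bound on the offset, not the exact displacement: because the OBDD is a directed acyclic graph, some nodes of $B_{0}$ (and even $N.\mbox{1-child}$ itself) may have been visited earlier through other paths, so fewer than $|B_{0}|$ fresh entries are actually produced. I would therefore phrase the counting argument as ``at most $|B_{0}|$ new nodes'' rather than ``exactly $|B_{0}|$''; this inequality is precisely what makes $L_{N.1\mbox{-}child}$ a valid bound in the sense of the child bound definition, and keeping the two cases symmetric with the previous proposition keeps the write-up short.
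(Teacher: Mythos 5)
Your proof is correct and follows essentially the same route as the paper's: the same two-case split (1-child already inserted versus not yet inserted), and the same observation that the depth-first traversal of $B_{0}$ can consume at most $|B_{0}|$ positions after $p$, with the inequality (rather than equality) arising because some nodes of $B_{0}$ may already be stored when $N$ is visited. Your explicit remark that every node inserted strictly between $N$ and $N.\mbox{1-child}$ belongs to $B_{0}$ makes the counting step slightly more transparent than the paper's phrasing, but the argument is the same.
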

\begin{proof}
As before, when the node $N$ is inserted two cases may occur:
\begin{enumerate}
\item $N.\mbox{1-child}$ has already been inserted in the vector;
\item $N.\mbox{1-child}$ has not yet been inserted in the vector.
\end{enumerate}
In the first case, $N.\mbox{1-child}$ occupies a position $p_{1} < p$. In the second case, since we are visiting the OBDD in depth first order  with priority on the 0-edges, $N.\mbox{1-child}$ will be inserted only after the visit of the subgraph $B_{0}$ has been completed, and all its nodes have been inserted in the node vector. As some  nodes of $B_{0}$  might be already stored in the node vector when $N$ is visited, the position of $N.\mbox{1-child}$ will be less or equal to $p+|B_{0}|+1$.
\hfill\end{proof}

\begin{example}
Consider  the OBDD $B$ in Figure~\ref{fig:exbddegdes}, and its node vector $\mathcal{V}_B$. Suppose that the 1-edge of node $b$ is corrupted. As the subgraph rooted in the 0-child $c$ of $b$ has dimension 5, and $b$ is located at the position $p_b = 1$ of $\mathcal{V}_B$, the child bound  of its 1-child is $L_{N.1\mbox{-}child} = p_b + 5 + 1 = 7$.
\end{example}

We can now define the set of nodes that could possibly be the children of a corrupted node. 
\begin{definition} 
Let $N$ be a node on level $l_N$, whose pointer to the child $N_f$ is corrupted.
The set of nodes that could be $N_f$ is the set  $$\mathcal{S}_{N_f} = \{ N_i \in \mathcal{V} \ |\ 0 \leq i \leq L_{N_f}  \ \wedge \   l_{N_i} > l_N\}\,.$$
\end{definition}
Observe that the nodes on levels less or equal to $l_N$ have been removed from the set, as  they cannot be children of $N$.
Let us now prove that this set certainly contains $N_f$.

\begin{proposition}
Let $N$ be a node whose pointer to the child $N_f$ is corrupted. Then, $N_f \in \mathcal{S}_{N_f}$.
\end{proposition}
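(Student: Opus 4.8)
The plan is to verify directly that the genuine child $N_f$ satisfies both membership conditions defining $\mathcal{S}_{N_f}$: the positional bound $0 \le i \le L_{N_f}$ on its index $i$ in the node vector, and the level condition $l_{N_f} > l_N$. I would first record the harmless but necessary observation that, although the pointer from $N$ to $N_f$ has been corrupted, the node $N_f$ itself is still present in the diagram; since $\mathcal{V}$ lists every node of $B$ and $N_f \in B$, the node $N_f$ occupies some well-defined position $i$ in $\mathcal{V}$.

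For the positional bound I would split into the two cases according to which edge was lost, because the child bound $L_{N_f}$ is computed by a different formula in each case. If the corrupted edge is the $0$-edge, so that $N_f = N.\mbox{0-child}$, then Proposition~\ref{limitelow} gives $L_{N_f} = p+1$ and guarantees $N_f \in \{\mathcal{V}[0], \ldots, \mathcal{V}[L_{N_f}]\}$; if instead the corrupted edge is the $1$-edge, so that $N_f = N.\mbox{1-child}$, then Proposition~\ref{limitehigh} gives $L_{N_f} = p+|B_{0}|+1$ with the same guarantee. In either case the position $i$ of $N_f$ in $\mathcal{V}$ satisfies $0 \le i \le L_{N_f}$, which is precisely the first defining condition of $\mathcal{S}_{N_f}$.

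For the level condition I would invoke the ordering property of the OBDD. Since $B$ is ordered and $N_f$ is a child of $N$, the defining inequality of ordered BDDs forces the label of $N_f$ to follow that of $N$ in the fixed variable order, i.e.\ $l_{N_f} > l_N$. This is exactly the second defining condition of $\mathcal{S}_{N_f}$.

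Combining the two observations, $N_f$ is a node of $\mathcal{V}$ whose position is at most $L_{N_f}$ and whose level exceeds $l_N$, so by definition $N_f \in \mathcal{S}_{N_f}$. I do not expect a genuine obstacle here: the substance is entirely carried by the two child-bound propositions and the OBDD ordering property, and the only point requiring a word of care is the case split on whether the lost edge is the $0$- or the $1$-edge, so that the correct formula for $L_{N_f}$ is used when applying the positional bound.
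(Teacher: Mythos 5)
Your proposal is correct and uses exactly the same ingredients as the paper's proof (Propositions~\ref{limitelow} and~\ref{limitehigh} for the positional bound, and the OBDD ordering property for the level condition); the paper merely phrases the argument as a proof by contradiction while you verify the two membership conditions directly, which is an immaterial difference.
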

\begin{proof}
By contradiction, suppose that  $N_f \not\in \mathcal{S}_{N_f}$. Thus, either the position of $N_f$ in the node vector is not included in the range $[0,L_{N_f}]$, or $l_{N_f} \le l_N$.
Observe that the first case cannot occur as the position of $N_f$  cannot be negative, and must be less or equal to the child bound $L_{N_f}$, as proved in Propositions~\ref{limitelow} and~\ref{limitehigh}. 
Finally, also the second case cannot occur, since if $N$ is the parent of  $N_f$, $l_N$ must be strictly less than $l_{N_f}$.
\hfill\end{proof}\\
We can now define the reconstruction algorithm for  corrupted pointers.

Observe that the node vector contains completely correct data and this guarantees the correctness of the algorithm.

First of all, the algorithm, through  the function \emph{ComputeLimit}, computes the child bound for the corrupted pointer, applying Propositions~\ref{limitelow} and~\ref{limitehigh}. 
\emph{ComputeLimit} uses the  functions \emph{PositionOf} and \emph{NodeOf} for computing the position of $N$ in the node vector and the dimension of the subgraph rooted in $N_{0}$, respectively.
Then, the set $S_{N_f}$ of all nodes that could be  children of $N$ is defined and examined. For each node $N_f$  in $S_{N_f}$, the algorithm  verifies the nodes corresponding to Hash($N$.0-child, $N_f$), if the corrupted pointer is a 1-edge, 
or to Hash($N_f$, $N$.1-child), if the corrupted pointer is a 0-edge; if one of them is the pointer to $N$, $N_f$ is the correct pointer.

Under certain circumstances, this strategy returns a wrong value. This is due to collisions on the hash tables. Since the values of the pointers are stored as a hash digest, it is not possible to calculate their original values from the unique table. In particular, let $N_1$ and $N_2$ be two pointers to be checked where $N_2$ is the correct one, and let $N$.1-child be the healthy edge of the node. A wrong reconstruction can happen when $H(N_1, N.\mbox{1-child}) = H(N_2, N.\mbox{1-child})$ and $N_1$ is checked before $N_2$. This problem can be handled by using a perfect hashing function, in order to avoid collisions, or using a good hash function that reduces the number of collisions as shown in the Experimental Result Section.

\begin{algorithm}[Reconstruction of the faulty edges]
\label{fig:algopuntatori} 
\ \ 

\begin{scriptsize}
\hrule 
\begin{tabbing}
{\bf INPUT}\\
$N$ /* Address of the node with a corrupted pointer */\\
$BDD$ /* OBDD containing $N$ */\\
$\mathcal{V}$ /* Node vector of the OBDD */\\
$EdgeType$ /* Variable indicating whether the pointer is a 1-edge or a 0-edge */\\
{\bf OUTPUT}\\
\var{Pointer} /* Correct pointer */\\ \\

\var{$L_{N_f}$} = {\em ComputeLimit}(N, EdgeType, $\mathcal{V}$) \\
\var{$\mathcal{S}_{N_f}$} = $\{ N_i \in \mathcal{V} \ |\ 0 \leq i \leq L_{N_f} \ \wedge \ l_{N_i} > l_N\}$ \\
\com{for} \= \com{each} $N_f\in \mathcal{S}_{N_f}$ \com{do} \\
\uno		/* unique subtable for the index $l_N$ */ \\
\uno		uniqueTable \= = OBDD.UniqueTables[$l_N$] \\
\uno		\com{if}\= (EdgeType == 1) \\
\due 		node\= = uniqueTable[Hash($N$.0-child, $N_f$)] \\
\uno		\com{else} \\
\due 		node\= = uniqueTable[Hash($N_f$, $N$.1-child)] \\
\uno			\com{while} \= (node $\not= N$ $\land$ node.Next $\not=$ NULL)\\
\due				/* visit of the collision list */ \\
\due				node = node.Next \\
\uno			\com{if}(node == N)\\
\due		\com{return} $N_f$ \\

\end{tabbing}
\vspace{-0.4cm}   {\em ComputeLimit}(N, EdgeType, $\mathcal{V}$) \rm \vspace{-8pt}
\begin{tabbing}
$p_N$ = \emph{PositionOf}($N$, $\mathcal{V}$) \\
\com{if}\=(EdgeType == 0) \\
\uno \emph{return} $p_N$ + 1 \\
\com{else} \\
\uno\com{return} $p_N$ + \emph{NodesOf}($N$.0-child) + 1
\end{tabbing}
\hrule
\end{scriptsize}
\end{algorithm}

Different strategies could be adopted to handle the possibility of failed recoveries. 
A conservative strategy could consist in reporting the fault as a fatal error any time the involved collision list of the unique subtable contains more than one pointer. Otherwise, we could slightly modify the content of the unique table by adding redundant information, to be exploited for the correct recovery of a pointer,  in case of collisions.
More in general, it would be probably convenient to adopt and apply to the particular structure of OBDDs,  the same strategies developed for  designing error resilient pointer based data structures~\cite{A96}.

\section{Experimental Results}
\label{exp}

We have tested our methods on the classical benchmarks taken from LGSynth93~\cite{Y91}. These benchmarks are relevant especially for logic synthesis applications, where OBDDs are widely applied. Each output has been  separately considered and the OBDDs have been constructed using the ON-set and DC-set of the benchmarks.

The first set of experiments has the purpose of computing the size of the index-resilient reduced OBDDs derived with Algorithm~\ref{fig:algoriduzioneNEW} in order to verify the memory gain of the proposed model with respect to the quasi-reduced OBDD model (QR-OBDDs). 

In order to evaluate the practical memory requirement (number of nodes) of the OBDDs generated by Algorithm~\ref{fig:algoriduzioneNEW}, we have implemented it in C and generated the quasi-reduced OBDD (QR-OBDD), the reduced OBDD (ROBDD) and the index-resilient OBDD (IR-OBDD)  for each considered benchmark. For the sake of briefness, we report in Table~\ref{Tab:index} only a significant subset of the results. The first column reports the name of the instance considered. The following two ones provide its input and output size. Then, the last three columns report the number of internal nodes for QR-OBDD, ROBDD and IR-OBDD considering each output separately. 

\begin{table}[t]
  \centering
  \caption{Number of internal nodes for quasi-reduced, reduced and index-resilient OBDDs.}
\begin{scriptsize}
    \begin{tabular}{|l|rr|rrr|}
       \hline
   {\bf Benchmark}  &  {\bf in}   &  {\bf out}    &  {\bf QR-OBDD }  &  {\bf ROBDD}   & {\bf IR-OBDD} \\
   \hline
\hline
 	al2   & 16    & 47    & 1218  & 269   & 504 \\
    alcom & 15    & 38    & 946   & 175   & 424 \\
    alu1  & 12    & 8     & 206   & 31    & 109 \\
    amd   & 14    & 24    & 1318  & 739   & 1021 \\
    b10   & 15    & 11    & 985   & 617   & 815 \\
    b2    & 16    & 17    & 6613  & 5568  & 5902 \\
    b9    & 16    & 5     & 453   & 196   & 334 \\
    br1   & 12    & 8     & 346   & 242   & 265 \\
    br2   & 12    & 8     & 285   & 174   & 190 \\
    clpl  & 11    & 5     & 140   & 53    & 84 \\
    co14  & 14    & 1     & 39    & 27    & 27 \\
    gary  & 15    & 11    & 988   & 625   & 814 \\
    in2   & 19    & 10    & 4006  & 2476  & 2988 \\
    intb  & 15    & 7     & 1862  & 1228  & 1631 \\
    mp2d  & 14    & 14    & 413   & 151   & 299 \\
    newapla & 12    & 10    & 272   & 78    & 134 \\
    newapla1 & 12    & 7     & 155   & 50    & 81 \\
    newtpla & 15    & 5     & 186   & 83    & 120 \\
    opa   & 17    & 69    & 3091  & 1164  & 2315 \\
    pdc   & 16    & 40    & 6204  & 4754  & 5563 \\
    ryy6  & 16    & 1     & 50    & 23    & 32 \\
    shift & 19    & 10    & 1206  & 189   & 667 \\
    t2    & 17    & 16    & 728   & 306   & 434 \\
    t3    & 12    & 8     & 300   & 111   & 227 \\
    t4    & 12    & 8     & 399   & 213   & 320 \\
    test2 & 11    & 35    & 11678 & 11195 & 11431 \\
    tial  & 14    & 8     & 2230  & 1677  & 1934 \\
   \hline
    \end{tabular}%
  \label{Tab:index}%
\end{scriptsize}
\end{table}%

The results show that IR-OBDDs are an interesting trade-off between memory requirements and error resilience.  
In particular, our algorithm for index-resilient OBDDs nearly always improves the size of the starting quasi-reduced OBDD.   In fact, starting from a QR-OBDD, our new reduction algorithm produces an IR-OBDD with an average gain of $17\%$ nodes, while the standard reduction algorithm allows a gain of about $29\%$ nodes (as shown in Table~\ref{Tab:index}).

We have run a second set of experiments to evaluate the frequency of wrong recoveries for faulty pointers in practical data sets~\cite{Y91}.
The algorithm has been implemented in C, using the CUDD library for the representation of the diagrams. CUDD implements the unique table as discussed in Section~\ref{prel},  and implements {\em shared diagrams}. Shared diagrams are used to share common subgraphs between different OBDDs. The benchmarks we used contain multioutput functions. Each output has been represented as a different OBDD and the entire function as a shared diagram. In that way, subgraphs common to two or more outputs are represented only once. The edge reconstruction algorithm is based on the node's children, thus shared diagrams don't affect the results. 
We have measured the ratio between the range to be searched, the actual number of lookups and the number of nodes of the OBDD. The average range to be searched covered 85\% of the nodes, while each reconstruction required, on average, to check the 37\% of the nodes (more details on this experimental evaluation can be found in~\cite{BCL13}).
In order to evaluate the hash table collision impact on the reconstruction correctness, we have run our experiments using three different settings for the starting hash tables dimension (using the CUDD init manager). We have considered the following dimensions: 256 (standard setting for CUDD), 1024, and 2048. The number of correct reconstructions are then:  $91\%$  for the starting dimension 256, $97\%$ for 1024, and $99\%$ for 2048. 
We can therefore conclude that with a reasonable dimension of 2048, we nearly reach full reconstruction.

\section{Conclusion}
\label{concl}

This paper has presented the first systematic study on resilient OBDDs. The paper has exploited redundancies of standard OBDD tools in order to reconstruct faulty information. Moreover, it has proposed a new  canonical model of OBDDs, which guarantees that a node with a faulty index has a  reconstruction cost $O(r)$, where $r$ is the number of nodes with corrupted index.

An interesting new research direction could be a deeper study of error detection in OBDD data structures. Moreover, since some of the proposed strategies do not always allow a complete reconstruction of faulty edges, a possible future work can be the study of different memorization techniques for OBDDs in order to exploit implicit redundancies.

Given the growing interest in data structures based on decision diagrams and their widespread  application in several research fields, it could  be worth studying the resilience of BDDs reduced with rules different from the  classic merge and deletion ones, as for instance the {\em zero-suppressed decision diagrams} (ZDDs),   widely used in  data mining~\cite{M93,Minato10,Minato13}.

\bibliographystyle{IEEEtranS}
\bibliography{spp}

\appendix
\section*{Appendix: The Apply Algorithm}
\label{opBDD}
\noindent
In this appendix we review the  algorithm  used for implementing the main operations on OBDDs.
For a more comprehensive treatment,  see~\cite{B86, B92}.

\medskip

All the binary Boolean operators on OBDDs are implemented by a general algorithm called {\sc Apply}. This algorithm takes in input a binary Boolean operator {\bf op} together with  two OBDDs  $B_f$ and $B_g$ with the same variable ordering,  representing two  functions $f$ and $g$, and computes the OBDD representing the function $f \mbox{ {\bf op} } g$  defined as
$$(f \mbox{ {\bf op} } g)\, (x_1, \ldots, x_n) = f (x_1, \ldots, x_n) \mbox{ {\bf op} } g (x_1, \ldots, x_n)\,.
$$
The  resulting  OBDD  obeys the same variable ordering of $B_f$ and $B_g$.

The algorithm {\sc Apply} thus provides a basic method for constructing the OBDD representation of any Boolean function $f$ starting from a  Boolean expression or logic gate network representing $f$ with a given set of binary operators.  

The	implementation	of	{\sc Apply}  relies on  the Shannon expansion 
$$f = \overline x_i  f|_{\overline x_i} + x_i f|_{x_i}\,,$$ 
where $f|_{\overline x_i}$ and $f|_{x_i}$ are the restrictions, or cofactors, of the function $f$ obtained assigning the constant values 0 and 1 to the input variable $x_i$, respectively. In particular, 	{\sc Apply}  exploits the fact that the algebraic operations commute with the Shannon expansion for any variable $x_i$, meaning that 
$$
(f \mbox{ {\bf op} } g) = \overline x_i  (f|_{\overline x_i} \mbox{ {\bf op} } g|_{\overline x_i})  + x_i (f|_{x_i} \mbox{ {\bf op} } g|_{x_i})\,.
$$
Thus, we can compute recursively the OBDD $B_{\,\mbox{\bf op}}$ representing $(f \mbox{ {\bf op} } g)$: we start from the root of $B_f$ and $B_g$ and we construct $B_{\,\mbox{\bf op}}$  by recursively constructing the OBDDs representing $(f|_{\overline x_0} \mbox{ {\bf op} } g|_{\overline x_0})$ and $(f|_{x_0} \mbox{ {\bf op} } g|_{x_0})$, where $x_0$ is the first input variable in the common variable ordering of $B_f$ and $B_g$; the roots of these two OBDDs represent respectively the $0$-child  and the $1$-child of the root of $B_{\,\mbox{\bf op}}$, labeled by $x_0$.

More precisely, suppose to execute {\sc Apply} on two OBDDs $B_f$ and $B_g$ with roots $N$ and $M$, respectively. We must consider several cases. If $N$ and $M$ are terminal nodes, a new terminal node is computed having the value of {\bf op} applied to the two constants labeling $N$ and $M$. Otherwise, if at least one node is non-terminal we proceed according to the index of the nodes: 
\begin{itemize}
\item  If the two nodes have the same index  $i$, we create a new node $U$ with index $i$, and we apply the algorithm recursively on $N.0$-child and  $M.0$-child to generate the OBDD whose root becomes the $0$-child of $U$,  and on $N.1$-child and  $M.1$-child to generate the OBDD whose root becomes the $1$-child of $U$.
\item If they have different indices, we proceed by pairing the $0$ and $1$-child of the node with lowest index with the  other node. Suppose for instance that $N.index = i$, but either $M$ is a terminal node, or $M.index > i$. This means that the function $g$ represented by the OBDD with root $M$ does not depend on $x_i$, i.e., $g|_{\overline x_i} = g|_{x_i} = g$, and therefore $(f \mbox{ {\bf op} } g) = \overline x_i  (f|_{\overline x_i} \mbox{ {\bf op} } g)  + x_i (f|_{x_i} \mbox{ {\bf op} } g)$. Hence, we create a new node $U$ with index $i$, and recursively apply the algorithm on $N.0$-child and $M$ to generate the OBDD whose root becomes the $0$-child of $U$, and on $N.1$-child and $M$ to generate the OBDD whose root becomes the $1$-child of $U$. A similar procedure is used in the reverse case, where $M.index < N.index$.
\end{itemize}
To implement the {\sc Apply} algorithm efficiently, two refinements are added. The first one  is used to avoid an exponential blow-up of the recursive calls and consists in maintaining a table $M_A$ of results of the form $M_A[N,M]=U$, indicating that the result of applying the algorithm to the OBDDs with roots $N$ and $M$ is the OBDD with root $U$.  Then, before executing {\sc Apply} on a pair of nodes, we first check whether the table contains an entry for these two nodes. If so, the results can be immediately returned without any further computation. Otherwise, we compute the result of {\sc Apply} on $N$ and $M$, and add a new entry on the table $M_A$ before returning the result.
The second refinement is based on the use of the unique table to ensure that the OBDD computed by the algorithm {\sc Apply} is reduced, i.e., it does not contain  isomorphic subgraphs and redundant nodes.

If  the table $M_A$  is implemented with constant look-up and insertion time (e.g., as a two-dimensional array or as a dynamic hash table with a perfect hashing function producing no collisions), the complexity of the {\sc Apply} procedure is $O(|B_f| |B_g|)$.

\end{document}